\newcommand{\nop}[1]{}
\newcommand{\tabincell}[2]{\begin{tabular}{@{}#1@{}}#2\end{tabular}}
\newtheorem{definition}{Definition}
\newtheorem{theorem}{Theorem}
\newtheorem{example}{Example}
\title{Accelerating Partial Evaluation in Distributed SPARQL Query Evaluation}
\author{%
{Peng Peng{\small $~^{1}$}, Lei Zou{\small $~^{2,3,4}$}, Runyu Guan{\small $~^{1}$}}%
\vspace{1.6mm}\\
\fontsize{10}{10}\selectfont\itshape
$^{1}$\,Hunan University, Changsha, China\\
$^{2}$\,Peking University, Beijing, China\\
$^{3}$\,Beijing Institute of Big Data Research, China\\
$^{4}$\,National Engineering Laboratory for Big Data Analysis Technology and Application (PKU), China\\
\\
\fontsize{9}{9}\selectfont\ttfamily\upshape
%
$^{1}$\{hnu16pp,guanrunyu\}@hnu.edu.cn, $^{2,3,4}$\,zoulei@pku.edu.cn
}
\begin{document}

\maketitle

\begin{abstract}
Partial evaluation has recently been used for processing SPARQL queries over a large resource description framework (RDF) graph in a distributed environment. However, the previous approach is inefficient when dealing with complex queries. In this study, we further improve the ``partial evaluation and assembly'' framework for answering SPARQL queries over a distributed RDF graph, while providing performance guarantees. Our key idea is to explore the intrinsic structural characteristics of partial matches to filter out irrelevant partial results, while providing performance guarantees on a network trace (data shipment) or the computational cost (response time). We also propose an efficient assembly algorithm to utilize the characteristics of partial matches to merge them and form final results. To improve the efficiency of finding partial matches further, we propose an optimization that communicates variables' candidates among sites to avoid redundant computations. In addition, although our approach is partitioning-tolerant, different partitioning strategies result in different performances, and we evaluate different partitioning strategies for our approach. Experiments over both real and synthetic RDF datasets confirm the superiority of our approach.
\end{abstract}

\section{Introduction}\label{sec:Introduction}
The resource description framework (RDF) is a semantic web data model that represents data as a collection of triples of the form $\langle$subject, property, object$\rangle$. 
An RDF dataset can also be represented as a graph where subjects and objects are vertices, and triples are edges with labels between vertices. Meanwhile, SPARQL is a query language designed for retrieving and manipulating an RDF dataset, and its primary building block is the basic graph pattern (BGP). A BGP query can also be seen as a query graph, and answering a BGP query is equivalent to finding subgraph matches of the query graph over the RDF graph. In this study, we focus on the evaluation of BGP queries. An example SPARQL query of four triple patterns (e.g., ?t label ?l) is listed in the following, and retrieves all people influencing Crispin Wright and their interests:

\nop{
An RDF dataset can also be represented
as a graph, where subjects and objects are vertices and triples are
edges with labels between vertices.
On the other hand, SPARQL is a query language designed for retrieving and manipulating an RDF dataset, and its primary building block is the basic graph pattern (BGP). A BGP query can also be seen as a query graph, and answering a BGP Q is equivalent to finding subgraph matches of the query graph over RDF graph. In this paper, we focus on the evaluation of BGP queries. An example SPARQL query of four triple patterns (e.g., ?t label ?l) is listed in the following, which retrieves all people influencing Crispin Wright and their interests.

A triple can be naturally seen as a pair of entities connected by a named relationship or an entity associated with a named attribute value. An RDF dataset can also be represented as a graph where subjects and objects are vertices, and triples are edges with property names as edge labels.
}

\small{
\begin{lstlisting}[language=SQL]
Select ?p2, ?l where {?t label ?l.
?p1 influencedBy ?p2.  ?p2 mainInterest ?t.
?p1 name ``Crispin Wright''@en.}.
\end{lstlisting}
}
\normalsize

With the increasing size of RDF data published on the Web, it is necessary for us to design a distributed database system to process SPARQL queries. In many applications, the RDF graphs are geographically or administratively distributed over the sites, and the RDF repository partitioning strategy is not controlled by the distributed RDF system itself. For example, the European Bioinformatics Institute\footnote{https://www.ebi.ac.uk/rdf/} has built up a uniform platform for users to query multiple bioinformatics RDF datasets, including BioModels, Biosamples, ChEMBL, Ensembl, Atlas, Reactome, and OLS. These datasets are provided by different data publishers, and should be administratively partitioned according to their data publishers. Thus, partitioning-tolerant SPARQL processing is desirable.

\nop{
With the increasing size of RDF data published on the Web, it is necessary for us to design a distributed database system to process SPARQL queries. In many applications, the RDF graph are geographically or administratively distributed over the sites, and the RDF repository partitioning strategy is not controlled by the distributed RDF system itself. For example, European Bioinformatics Institute has built up a uniform platform for users to query multiple bioinfomatics RDF datasets, including BioModels, Biosamples, ChEMBL, Ensembl, Atlas, Reactome and OLS. These datasets are provided by different data publishers and had better be administratively partitioned according to their data publishers. Thus, partitioning-tolerant SPARQL processing is desirable.
}

For partitioning-tolerant SPARQL processing on distributed RDF graphs, Peng et al.\cite{DBLP:journals/corr/PengZOCZ14} discuss how to evaluate SPARQL queries in a ``partial evaluation and assembly'' framework. However, the framework's efficiency has significant potential for improvement. Its major bottleneck is the large volume of partial evaluation results, leading to a high cost for generating and assembling the results.


In this study, we propose several optimizations for the ``partial evaluation and assembly'' framework \cite{DBLP:journals/corr/PengZOCZ14}, to prune the irrelevant partial evaluation results, and assemble them efficiently to form the final results. The first step is to compress all partial evaluation results into a compact data structure named the local partial match equivalence class (LEC) feature. Then, we can communicate the LEC features among sites to filter out some irrelevant partial evaluation results. We can prove that the proposed optimization technique is \emph{partition bounded} in both \emph{response time} and \emph{data shipment}  \cite{DBLP:journals/pvldb/FanWWD14}. The second step is to assemble all local partial matches based on their LEC features. Finally, to avoid further redundant computations within the sites, we propose an optimization that communicates variables' candidates among the sites to prune some irrelevant candidates. In addition, although our approach is partitioning-tolerant, different partitioning strategies result in different performances, and we also evaluate different partitioning strategies for our approach.


\nop{
In addition, although our framework is partitioning-tolerant, the partitioning strategies still impact the query performance, so we need propose a good partitioning strategy special for our framework. In Section \ref{sec:Partitioning}, based on the complexity analysis of our framework, we find that a partitioning strategy where crossing edges are far from each other can reduce communication cost. Then, we propose a partitioning strategy that use the maximal edge independent sets to partition the RDF graph. Our partitioning strategy can also provide a good performance guarantee of our framework.
}

Thus, we make the following contributions in this study.

\begin{itemize}
\item We explore the intrinsic structural characteristics of partial results to compress them into a compact data structure, the \emph{LEC feature}. We communicate and utilize the LEC features to prune some irrelevant partial results. We prove theoretically that the LEC feature can guarantee the performance of the pruning optimization in both \emph{response time} and \emph{data shipment}.
  \item	We propose an efficient LEC feature-based assembly algorithm to merge all the partial results together and form the final results.
  \item We present an optimization based on the communication of the variables' internal candidates among different sites to avoid further redundant computations within the sites.
    \item We define a specific cost model for our method to measure the cost of different partitioning strategies, and to select the best partitioning from the existing partitionings.
  \item We conduct experiments over both real and synthetic RDF datasets to confirm the superiority of our approach.
\end{itemize}

\nop{
The rest of the paper is organized as follows. Section \ref{sec:background} provides the fundamental definitions and introduces the background of  ``partial evaluation and assembly'' framework. Section \ref{sec:background} gives an overview of our approach. Definitions of LEC features is covered in Section \ref{sec:LPMIsomorphism} and the assembly of LEC features to compute the final query result is discussed in Section \ref{sec:Assembly}. We evaluate our approach, both in terms of its internal characteristics and in terms of its relative performance against other approaches in Section \ref{sec:Experiment}. We discuss related work in the areas of distributed SPARQL query processing and partial query evaluation in Section \ref{sec:relatedwork}. Section \ref{sec:Conclusion} concludes the paper and outlines some future research directions.
}

\section{Background}
\label{sec:background}
\subsection{Distributed RDF Graph and SPARQL Query}
\label{sec:RDFandSPARQL}

An RDF dataset can be represented as a graph where subjects and objects are vertices, and triples are labeled edges. In this study, an RDF graph $G$ is vertex-disjoint-partitioned into a number of \emph{fragments}, each of which resides at one site. The vertex-disjoint partitioning methods guarantee that there are no overlapping vertices between fragments. Here, to guarantee data integrity and consistency, we store some replicas of crossing edges. Formally, we define the \emph{distributed RDF graph} as follows:


\begin{definition} \label{def:internaldistributedgraph} \textbf{(Distributed RDF Graph)} Let $u$ and $\overrightarrow {uu^\prime}$ denote the vertex and edge in an RDF graph. A distributed RDF graph $G=\{V,E, \Sigma \}$ consists of a set of fragments $\mathcal{F} = \{F_1,F_2,...,F_k\}$, where each $F_i$ is specified by $(V_i \cup V_i^e, E_i \cup E_i^c, \Sigma_i)$ ($i=1,...,k$) such that:
\begin{enumerate}
\item $\{V_1,...,V_k\}$ is a partitioning of $V$, i.e., $V_i  \cap V_j  = \emptyset ,1 \le i,j \le k,i \ne j $ and $\bigcup\nolimits_{i = 1,...,k} {V_i  = V}$ ;
\item $E_i \subseteq V_i \times V_i$, $i=1,...,k$;

\item $E_i^c$ is a set of crossing edges between $F_i$ and other fragments, i.e.,
\begin{multline*}
 E_i ^c = (\bigcup\nolimits_{1 \le j \le k \wedge j \ne i} {\{ \overrightarrow {uu^\prime} |u \in F_i  \wedge u^\prime \in F_j  \wedge \overrightarrow{uu^\prime}  \in E \} }) \\ \bigcup
 (\bigcup\nolimits_{1 \le j \le k \wedge j \ne i} {\{ \overrightarrow {u^\prime u} |u \in F_i  \wedge u^\prime \in F_j  \wedge \overrightarrow{u^\prime u} \in E   \}} )
 \end{multline*}

 \item A vertex $u^\prime \in V_i^e$ if and only if vertex $u^\prime$ resides in other fragment $F_j$ and $u^{\prime}$ is an endpoint of a crossing edge between fragment $F_i$ and $F_j$ ($F_i \neq F_j$),  i.e.,
\begin{multline*}
 V_i^e = (\bigcup\nolimits_{1 \le j \le k \wedge j \ne i} \{ {u^\prime} |\overrightarrow {uu^\prime}$  $\in E_i ^c \wedge u \in F_i \} ) \bigcup \\
 (\bigcup\nolimits_{1 \le j \le k \wedge j \ne i} {\{ {u^\prime} |\overrightarrow {u^\prime u}  \in E_i ^c \wedge u \in F_i \} })
 \end{multline*}

\item Vertices in $V_i^e$ are called \emph{extended} vertices of $F_i$, and vertices in $V_i$ are called \emph{internal} vertices of $F_i$; and
\item $\Sigma_i$ is a set of edge labels in $F_i$.
\end{enumerate}
\end{definition}

\begin{example}\label{example:dataandquery}
Fig. \ref{fig:datagraph} shows a distributed RDF graph $G$ consisting of three fragments $F_1$, $F_2$, and $F_3$. The numbers besides the vertices are vertex IDs that are introduced for ease of presentation. In Fig. \ref{fig:datagraph}, $\overrightarrow{001,006}$ and $\overrightarrow{006,005}$ are crossing edges between $F_1$ and $F_2$. In addition, edge $\overrightarrow{001,012}$ is a crossing edge between $F_1$ and $F_3$. Hence, $V_1^e=\{006,012\}$ and $E_1^c=\{\overrightarrow{001,006},$  $\overrightarrow{006,005},\overrightarrow{001,012} \}$.$\Box$
\end{example}

\begin{figure}[h]
		\includegraphics[scale=0.41]{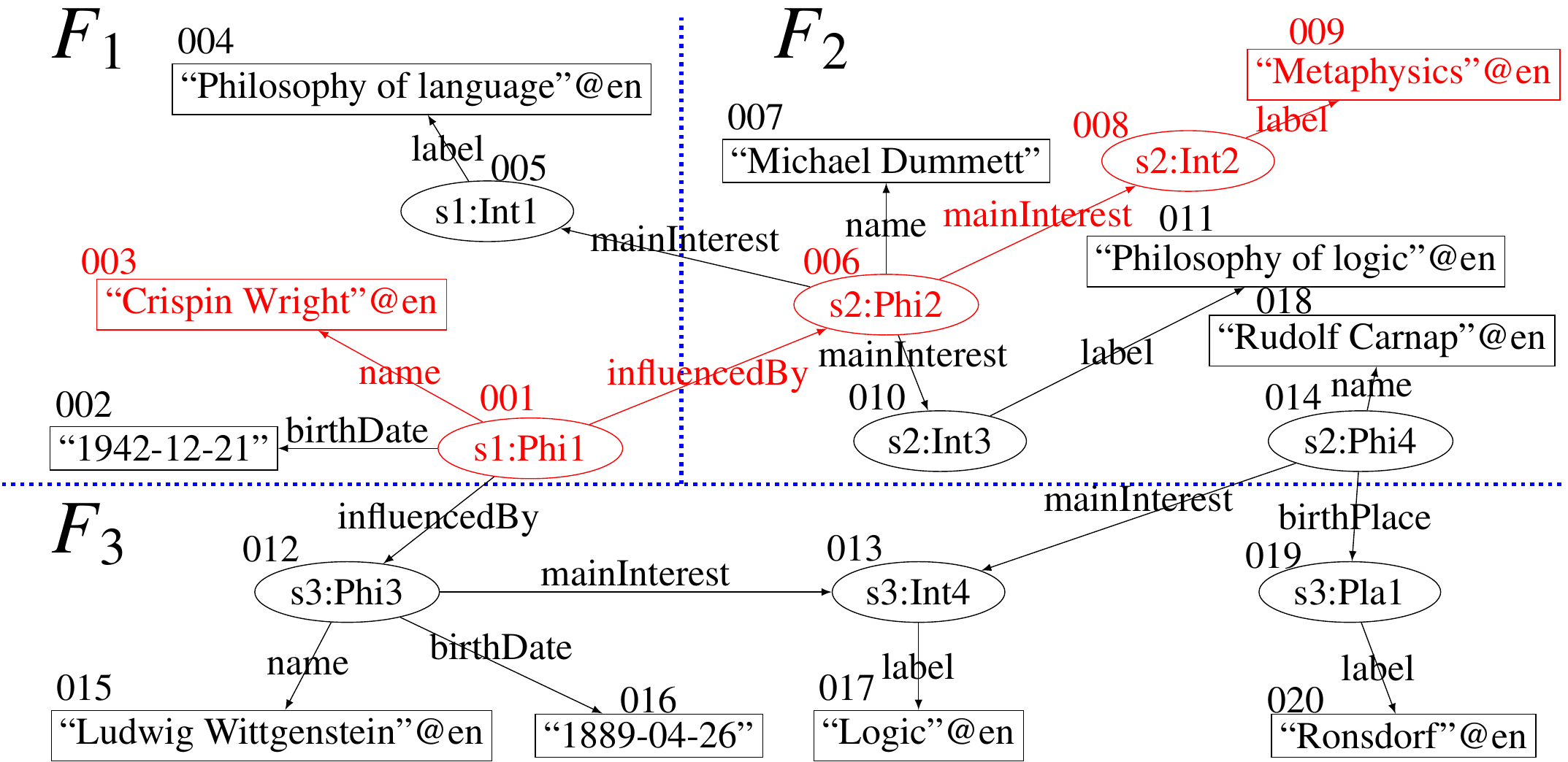}
\caption{Distributed RDF Graph}%
   \label{fig:datagraph}
\end{figure}

Similarly, a SPARQL query can also be represented as a query graph $Q$. In this study, we focus on BGP queries as they are foundational to SPARQL, and focus on techniques for handling them.


\begin{definition}\label{def:query}\textbf{(SPARQL BGP Query)}
A \emph{SPARQL BGP query} is denoted as $Q=\{V^{Q},$ $E^{Q}, \Sigma^{Q}\}$, where $V^{Q} \subseteq V\cup V_{Var}$ is a set of vertices, $V$ denotes all vertices in the RDF graph $G$, $V_{Var}$ is a set of variables, and $E^{Q} \subseteq V^{Q} \times V^{Q}$ is a multiset of edges in $Q$. Each edge $e$ in $E^Q$ either has an edge label in $\Sigma$ (i.e., property), or the edge label is a variable.
\end{definition}

\begin{example}\label{example:query}
Fig. \ref{fig:querygraph} shows the query graph corresponding to the example query shown in Section \ref{sec:Introduction}. There are four edges in the query graph, and each edge maps to a triple pattern in the example query. Both vertices and edges in the query graph can be variable.
$\Box$
\end{example}

\begin{figure}[h]
		\includegraphics[scale=0.5]{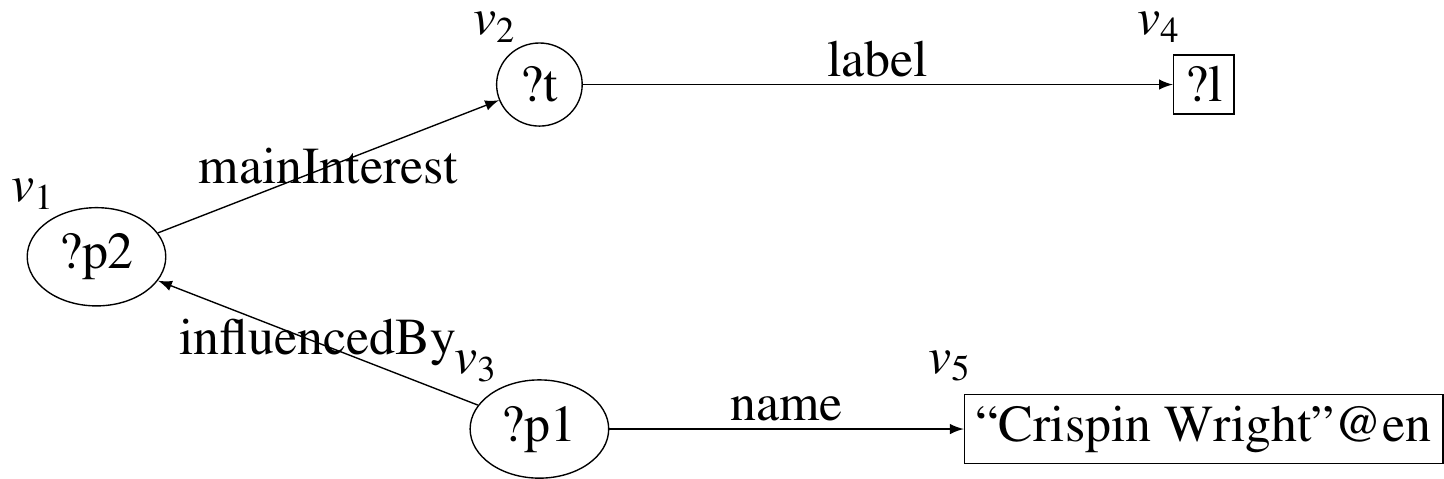}
\caption{ SPARQL Query Graph}
   \label{fig:querygraph}
\end{figure}

We assume that $Q$ is a connected graph; otherwise, all connected components of $Q$ are considered separately. Answering a SPARQL query is equivalent to finding all subgraphs of $G$ homomorphic to $Q$. The subgraphs of $G$ homomorphic to $Q$ are called \emph{matches} of $Q$ over $G$.

\begin{figure*}
\centering
		\includegraphics[scale=0.42]{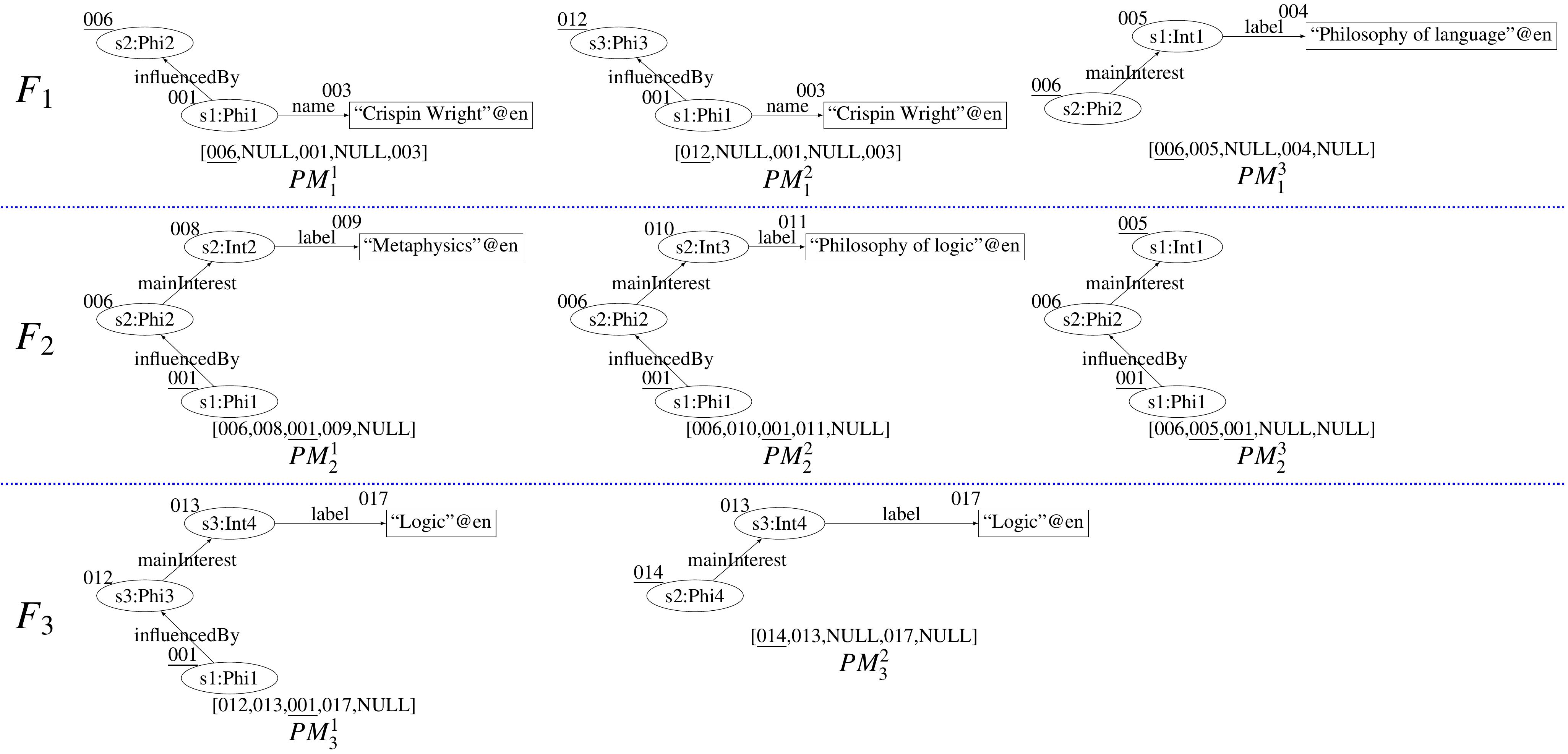}
\caption{Example Local Partial Matches}
   \label{fig:lpm}
\end{figure*}

\begin{definition}\label{def:sparqlmatch}\textbf{(SPARQL Match)}
Consider an RDF graph $G$ and a connected query graph $Q$ that has $n$ vertices $\{v_1,...,v_n\}$. A subgraph $M$ with $m$ vertices $\{u_1, ...,u_m\}$ (in $G$) is said to be a \emph{match} of $Q$ if and only if there exists a \emph{function} $f$ from $\{v_1,...,v_n\}$ to $\{u_1,...,u_m\}$ ($n \ge m$) where the following conditions hold:
if $v_i$ is not a variable, $f(v_i)$ and $v_i$ have the same uniform resource identifier (URI) or literal value ($1\leq i \leq n$);
if $v_i$ is a variable, there is no constraint over $f(v_i)$ except that $f(v_i)\in \{u_1,...,u_m\}$ ;
and if there exists an edge $\overrightarrow{v_iv_j}$ in $Q$, there also exists an edge $\overrightarrow{f{(v_i)}f{(v_j)}}$ in $G$. Let $L(\overrightarrow{v_iv_j})$ denote a multi-set of labels between $v_i$ and $v_j$ in $Q$, and $L(\overrightarrow{f{(v_i)}f{(v_j)}})$ denote a multi-set of labels between $f(v_i)$ and $f(v_j)$ in $G$. There must exist an \emph{injective function} from edge labels in $L(\overrightarrow{v_iv_j})$ to edge labels in $L(\overrightarrow{f{(v_i)}f{(v_j)}})$. Note that a variable edge label in $L(\overrightarrow{v_iv_j})$ can match any edge label in $L(\overrightarrow{f{(v_i)}f{(v_j)}})$.
\end{definition}

\begin{definition}{(\textbf{Problem Statement})}
Let $G$ be a distributed RDF graph that consists of a set of fragments $\mathcal{F} = \{F_{1}, \ldots, F_{k}\}$, and let $\mathcal{S} = \{S_{1}, \ldots, S_{k}\}$ be a set of sites such that $F_{i}$ is located at $S_{i}$. Given a SPARQL BGP query $Q$, our goal is to find all matches of $Q$ over $G$.
\end{definition}

Note that for simplicity of exposition, we are assuming that each site hosts one fragment.
Finding matches in a site can be evaluated locally using a centralized RDF triple store.
In this study, we only focus on how to find the matches crossing multiple sites efficiently. In our prototype experiments, we modify gStore \cite{Zou:2013fk} to perform partial evaluation.

\begin{example}\label{example:crossingmatch}
Given a SPARQL query graph $Q$ in Fig. \ref{fig:querygraph}, there exists a crossing match mapping to the subgraph induced by vertices 003,001,006,008, and 009 (shown in the red vertices and edges in Fig. \ref{fig:datagraph}).$\Box$
\end{example}

\subsection{Partial Evaluation-Based SPARQL Query Evaluation}
\label{sec:LocalPartialMatch}
As we extend the distributed SPARQL query evaluation approach based on the ``partial evaluation and assembly'' framework in \cite{DBLP:journals/corr/PengZOCZ14}, we give its brief background here.

In our framework, each site $S_i$ receives the full query graph $Q$, and computes the partial answers (called \emph{local partial matches}) based on the known input $F_i$ (we assume that each site hosts one fragment, as indicated by its subscript). Intuitively, a local partial match $PM_i$ is an overlapping part between a crossing match $M$ and fragment $F_i$. Moreover, $M$ may or may not exist depending on the as-yet unavailable input ${G^\prime}$ . Based only on the known input $F_i$, we cannot judge whether $M$ exists or not.

\begin{definition}\label{def:localmaximal} \textbf{(Local Partial Match)} Given a SPARQL query graph $Q$ and a connected subgraph $PM$ with $n$ vertices $\{u_1,...,u_n\}$ ($n \leq |V^Q|$) in a fragment $F_k$, $PM$ is a \emph{local partial match} in $F_k$ if and only if there exists a function $f:V^Q$ $\to \{u_1,...,u_n\} \cup \{NULL\}$ that holds the following conditions:
\begin{enumerate}
  \item If $v_i$ is not a variable, $f(v_i)$ and $v_i$ have the same URI or literal value or $f(v_i)=NULL$.
\item If $v_i$ is a variable, $f(v_i) \in \{u_1,...,u_n\}$ or $f(v_i)=NULL$.
\item If there exists an edge $\overrightarrow{v_iv_j}$ in $Q$ ($ i\neq j$), then $PM$ should meet one of the following five conditions: there also exists an edge $\overrightarrow{f{(v_i)}f{(v_j)}}$ in $PM$ with property $p$ and $p$ is the same as the property of $\overrightarrow{v_iv_j}$, there also exists an edge $\overrightarrow{f{(v_i)}f{(v_j)}}$ in $PM$ with property $p$ and the property of $\overrightarrow{v_iv_j}$ is a variable, there does not exist an edge $\overrightarrow{f{(v_i)}f{(v_j)}}$ but $f{(v_i)}$ and $f{(v_j)}$ are both in $V_k^e$, $f{(v_i)}=NULL$, or $f{(v_j)}=NULL$.
\item  $PM$ contains at least one crossing edge, guaranteeing that an empty match does not qualify.
\item If $f(v_i) \in V_k$ (i.e., $f(v_i)$ is an internal vertex of $F_k$) and $\exists \overrightarrow {v_i v_j} \in Q$ (or $\overrightarrow {v_j v_i} \in Q$), there must exist $f(v_j) \neq NULL$ and  $\exists \overrightarrow {f(v_i)f(v_j)} \in PM$ (or $\exists \overrightarrow {f(v_j)f(v_i)} \in PM$). Furthermore, if $\overrightarrow {v_i v_j}$ (or $\overrightarrow {v_j v_i})$ has a property $p$, $\overrightarrow {f(v_i)f(v_j)}$ (or $\overrightarrow {f(v_j)f(v_i)}$) has the same property $p$.
\item If $f(v_i)$ and $f(v_j)$ are internal vertices of $F_k$, then there exist a \emph{weakly connected path} $\pi$ between $v_i$ and $v_j$ in $Q$ and each vertex in $\pi$ maps to an internal vertex of $F_k$.
\end{enumerate}

The vector $[ f{(v_1)}, ..., f{(v_n)}]$ is a serialization of a local partial match. $f^{-1}(PM)$ is the subgraph (of $Q$) induced by a set of vertices, where for any vertex $v\in f^{-1}(PM)$, $f(v)$ is not NULL.
\end{definition}

Generally, a local partial match is a subset of a complete SPARQL match. The first three conditions in Definition \ref{def:localmaximal} are analogous to a SPARQL match while vertices of query $Q$ are allowed to match a special value NULL.
The fourth condition requires that a local partial match must have at least one crossing edges, as it is used to form the possible crossing match.
The fifth condition is that if vertex $v$ (in query $Q$) is matched to an internal vertex, all neighbors of $v$ should also be matched in this local partial match.
The sixth condition is to ensure the correctness of our framework \cite{DBLP:journals/corr/PengZOCZ14}.

\begin{example}\label{example:pmset}
Given a query $Q$ in Fig. \ref{fig:querygraph} and a distributed RDF graph $G$ in Fig. \ref{fig:datagraph}, Fig. \ref{fig:lpm} shows all local partial matches and their serialization vectors in each fragment.
A local partial match in the fragment $F_i$ is denoted as $PM_i^j$, where the superscripts distinguish local partial matches in the same fragment. Furthermore, we underline all extended vertices in serialization vectors.

For example, $PM_1^1$ is the overlapping part between the crossing match discussed in Example \ref{example:crossingmatch} and fragment $F_1$. $PM_1^1$ contains a crossing edge $\overrightarrow{001,006}$. In $PM_1^1$, the query vertices $v_3$ and $v_5$ are matched to the internal vertices $001$ and $003$ of $F_1$, so $v_3$ and $v_5$ are weakly connected and all neighbors of $v_3$ and $v_5$ are also matched.
$\Box$
\end{example}

For a SPARQL query, local partial matches bear structural similarities (see Section \ref{sec:Property}); hence, they can be represented as vectors of Boolean formulas associated with crossing edges (see Section \ref{sec:DataStructure}). We can utilize these formulas to filter out some irrelevant local partial matches (see Section \ref{sec:LECBasedPruneRules}). Last, the remaining local partial matches are assembled to get the final answer (see Section \ref{sec:Assembly}). Note that,
in this study, we focus on how to represent the local partial matches in a compact way and prune some irrelevant local partial matches. We use the algorithm in \cite{DBLP:journals/corr/PengZOCZ14} directly to find local partial matches.

\section{Overview}\label{sec:Framework}
We extend the \emph{partial evaluation and assembly} \cite{DBLP:journals/csur/Jones96} framework to answer SPARQL queries over a distributed RDF graph $G$, as shown in Fig. \ref{fig:framework}. In our execution model, there are two stages: the partial evaluation stage and the assembly stage.

In the partial evaluation stage, each site $S_i$ first receives the full query graph $Q$ and finds all sets of internal candidates.The coordinator site assembles all sets of internal candidates from different sites, and gains the candidates' sets of all variables (Section \ref{sec:Advancing}). The coordinator site distributes the candidates' sets, and each site uses them to determine the local partial matches of $Q$ in $F_i$, at each site $S_i$. We explore the intrinsic structural similarities of local partial matches to divide these local partial matches into some equivalence classes, and propose a compact data structure named the \emph{LEC feature} (Definition \ref{def:LECSign}) to compress them. Only by joining LEC features can we determine the local partial matches that can contribute to the complete matches (Section \ref{sec:LPMIsomorphism}). In addition, we can also prove that the communication cost of all LEC features depends only on the size of the query and the partitioning of the graph (Section \ref{sec:Analysis}).

In the assembly stage, we divide all local partial matches into groups, and propose a join algorithm based on the LEC features (Section \ref{sec:Assembly}).

\begin{figure}[h]
		\includegraphics[scale=0.18]{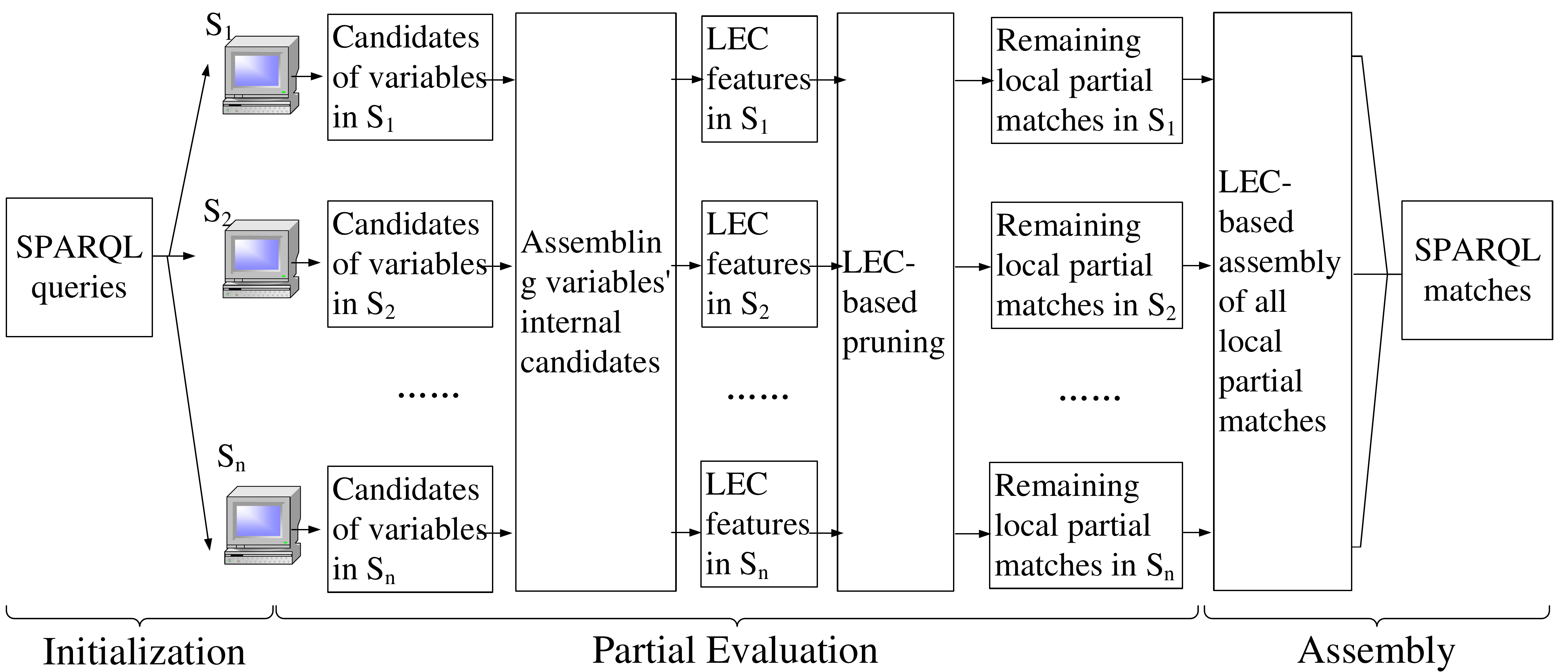}
\caption{Overview of Our Method}
   \label{fig:framework}
\end{figure}

\section{LEC Feature-based Optimization}\label{sec:LPMIsomorphism}
\subsection{Local Partial Match Equivalence Class}\label{sec:Property}
As discussed in \cite{DBLP:journals/corr/PengZOCZ14}, only local partial matches with common crossing edges from different fragments may join together via their common crossing edges. Hence, if two local partial matches generated from the same fragment contain the same crossing edges and these crossing edges map to the same query edges, then they can join with the same other local partial matches, and this means that they should have similar structures. For example, let us consider two local partial matches, $PM_2^1$ and $PM_2^2$ in Fig. \ref{fig:lpm}. They contain the common crossing edge $\overrightarrow {001,006}$, and $\overrightarrow {001,006}$  maps to the query edge $\overrightarrow {v_3v_1}$ in both $PM_2^1$ and $PM_2^2$. Thus, $PM_2^1$ and $PM_2^2$ are homomorphic to the same subgraph of the query graph. Any other local partial match (like $PM_1^1$) that can join with $PM_2^1$ can also join with $PM_2^2$.

We formalize the observation as the following theorem.

\begin{theorem}\label{theorem:LPMIsomorphism} Given two local partial matches $PM_i$ and $PM_j$ from fragment $F_k$ with functions $f_i$ and $f_j$, we can learn that $f_i^{-1}(PM_i)=f_j^{-1}(PM_j)$, where $f_i^{-1}(PM_i)$ and $f_i^{-1}(PM_i)$ are the subgraphs of $Q$ induced by the matched vertices, if they meet the following conditions:
\begin{enumerate}
\item $\forall \overrightarrow {u_iu_j}\in PM_i(or \ PM_j)$, if $\overrightarrow {u_iu_j}\in  E_k^c$, $\overrightarrow {u_iu_j}\in PM_j(or \ PM_i)$; and
  \item $\forall \overrightarrow {u_iu_j}\in PM_i(or \ PM_j)$, if $\overrightarrow {u_iu_j} \in E_k^c$, $f_i^{-1}(u_i) =f_j^{-1}(u_i)$ and $f_i^{-1}(u_j) =f_j^{-1}(u_j)$.
\end{enumerate}
\end{theorem}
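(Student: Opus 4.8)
The plan is to show that the two vertex‑induced subgraphs $f_i^{-1}(PM_i)$ and $f_j^{-1}(PM_j)$ have \emph{the same vertex set}; since a vertex‑induced subgraph of $Q$ is determined by its vertex set, this immediately yields equality of the subgraphs. Write $V_i^{\ast}=\{v\in V^{Q}: f_i(v)\neq NULL\}$ and $V_j^{\ast}$ analogously, and recall from Definition~\ref{def:internaldistributedgraph} that the vertices of $F_k$ split into disjoint sets $V_k$ (internal) and $V_k^{e}$ (extended), that every internal edge lies in $V_k\times V_k$, and that every crossing edge joins an internal vertex to an extended vertex — so no edge of $F_k$ joins two extended vertices. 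First I would record a structural fact: since $PM_i$ is connected and (by condition~(4) of Definition~\ref{def:localmaximal} together with the disjointness of the $V_l$'s, which forbids crossing self‑loops) has at least two vertices, every extended vertex occurring in $PM_i$ is incident in $PM_i$ to a crossing edge, hence is the extended endpoint of one of the crossing edges of $PM_i$. By hypothesis~(1) the crossing edges of $PM_i$ and of $PM_j$ form the same set $C$, and by hypothesis~(2) we have $f_i^{-1}(u)=f_j^{-1}(u)$ for every endpoint $u$ of every edge of $C$. Let $W$ be the union of these preimage sets over the endpoints of $C$, and split $W=X\sqcup W_{I}$ according to whether the corresponding data vertex lies in $V_k^{e}$ or in $V_k$ — a property of $F_k$ alone, hence the same split for $i$ and $j$. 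The structural fact then gives $X=\{v\in V_i^{\ast}:f_i(v)\in V_k^{e}\}=\{v\in V_j^{\ast}:f_j(v)\in V_k^{e}\}$, and $W_{I}\neq\emptyset$ (each crossing edge of $C$ has an internal endpoint, which is a vertex of $PM_i$ and therefore has a non‑$NULL$ preimage, recalling that $PM$ is exactly the subgraph on the images of the non‑$NULL$ query vertices). So it remains to prove that the internal parts $I_i:=\{v\in V_i^{\ast}:f_i(v)\in V_k\}$ and $I_j$ coincide.

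For this I would characterise $I_i$ by a closure process that refers only to $Q$, to $X$, and to $W_{I}$ — all common to $i$ and $j$. Consider the operator that, starting from a set $S\subseteq V^{Q}$, repeatedly adjoins every $Q$‑neighbour of a vertex of $S$ that is not already in $X$. Condition~(5) of Definition~\ref{def:localmaximal} states that every $Q$‑neighbour of an internal‑matched query vertex is itself matched, hence lies in $I_i\cup X$; thus $I_i$ is closed under this operator, and since $W_{I}\subseteq I_i$ it contains the closure of $W_{I}$. Conversely, condition~(6) of Definition~\ref{def:localmaximal} says any two vertices of $I_i$ are joined by a weakly connected path in $Q$ all of whose vertices lie in $I_i$, so $Q[I_i]$ is weakly connected; as $W_{I}$ is a nonempty subset of $I_i$ and $I_i\cap X=\emptyset$, every $v\in I_i$ is joined to some vertex of $W_{I}$ by a path inside $I_i$, and tracing that path step by step exhibits $v$ as produced by the closure operator applied to $W_{I}$. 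Hence $I_i$ equals the closure of $W_{I}$; the identical computation gives $I_j$ equal to the closure of $W_{I}$, so $I_i=I_j$. Therefore $V_i^{\ast}=X\sqcup I_i=X\sqcup I_j=V_j^{\ast}$, and the two induced subgraphs of $Q$ coincide.

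The step I expect to be the main obstacle is the structural lemma in the first paragraph — that an extended vertex of a local partial match can appear only as the extended endpoint of one of its crossing edges — because it requires careful use of Definition~\ref{def:internaldistributedgraph} (disjointness of the $V_l$'s, crossing edges being internal–extended, internal edges staying within $V_k$) together with connectedness of $PM_i$ and the nonemptiness forced by condition~(4); I would also make explicit the intended reading that $PM$ is the subgraph on the images of the non‑$NULL$ query vertices, which is what guarantees $W_{I}\neq\emptyset$. The rest is routine verification: that $X$, $W_{I}$, and the closure operator depend only on data shared by the two matches, and that conditions~(5) and~(6) say precisely ``$I_i$ is closed under the operator'' and ``$Q[I_i]$ is weakly connected''.
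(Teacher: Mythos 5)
Your proposal is correct and follows essentially the same route as the paper's proof: both arguments pin down the extended-matched query vertices via the shared crossing edges and hypothesis (2), and then propagate agreement to the internal-matched vertices using condition (5) (neighbors of internal-matched vertices are matched) together with condition (6) (weak connectivity of the internal part) --- your closure-operator characterization is just a repackaging of the paper's step-by-step walk along the path $\pi$. The only divergence is that you skip the paper's explicit edge-set check (which uses condition (3) of Definition~\ref{def:localmaximal}) by observing that $f^{-1}(PM)$ is by definition a vertex-induced subgraph of $Q$ and hence determined by its vertex set; that shortcut is legitimate given the paper's definition.
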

\begin{proof}
First, we prove that $\forall v \in f_i^{-1}(PM_i)$, $v \in f_j^{-1}(PM_j)$. For any vertex $v \in f_i^{-1}(PM_i)$, there are two cases: 1) $PM_i$ contains an edge $e\in E_k^c$ and $f_i(v)$ is an endpoint of $e$; and 2) all edges adjacent to $f_i(v)$ in $PM_i$ are not crossing edges.

If $PM_i$ contains an edge $e\in E_k^c$ and $f_i(v)$ is an endpoint of $e$, $e \in E_k^c$, $e\in PM_j$. Hence, $f_i(v)\in PM_j$. Furthermore, because of condition 2, $v=f_i^{-1}(f_i(v))=f_j^{-1}(f_i(v))$. Thus, $v \in f_j^{-1}(PM_j)$.

Then, let us consider the case that all edges adjacent to $f_i(v)$ in $PM_i$ are not crossing edges. Because $f_i(v)$ does not belong to any crossing edges in $PM_i$, $f_i(v)$ is an internal vertex of $F_k$. According to condition 6 of Definition \ref{def:localmaximal}, there exists a weakly connected path between $v$ and any other vertices mapping to internal vertices in $PM_i$. Therefore, given a crossing edge $\overrightarrow {f_i(v_1)f_i(v_2)}\in PM_i$ where $f_i(v_1)$ is an internal vertex, there exists a weakly connected path $\pi = \{v_1, v_2,..., v\}$ in $f_i^{-1}(PM_i)$, and all vertices in $\pi$ map to internal vertices of $F_k$.

Let us consider the vertices in $\pi$ from $v_1$ to $v$ one by one. As $f_i(v_1)$ is an endpoint of a crossing edge, $v_1 \in f_j^{-1}(PM_j)$. In addition, because $PM_i$ and $PM_j$ are from the same fragment, $f_j(v_1)$ in $PM_j$ is still an internal vertex. According to condition 5 of Definition \ref{def:localmaximal}, all neighbors of $v_1$ have been matched in $PM_j$, so $v_2$ has been matched in $PM_j$. Furthermore, $f_j(v_2)$ must be an internal vertex. Otherwise, $\overrightarrow {f_j(v_1)f_j(v_2)}$ is a crossing edge, so $v_2 = f_j^{-1}(f_j(v_2)) =f_i^{-1}(f_j(v_2))$. In other words, $f_j(v_2)$ is an extended vertex of $F_k$ and also maps to $v_2$ in $f^{-1}_i(PM_i)$. This is in conflict with the fact that all vertices in $\pi$ map to internal vertices of $F_k$. By that analogy, we can prove that all other vertices in $\pi$ have been matched in $PM_j$. Hence, $v\in f_j^{-1}(PM_j)$ and $f_j(v)$ is an internal vertex.

Similarly, we can prove that $\forall v \in f_j^{-1}(PM_j)$, $v \in f_i^{-1}(PM_i)$. Therefore, the vertex set of $f_i^{-1}(PM_i)$ is equal to the vertex set of $f_j^{-1}(PM_j)$. Moreover, for each vertex $v$ in $f_i^{-1}(PM_i)$ and $f_j^{-1}(PM_j)$, both of $f_i(v)$ and $f_j(v)$ are internal vertices or extended vertices.

In contrast, for each edge $\overrightarrow {v_1v_2}\in f_i^{-1}(PM_i)$, owing to the condition 3 of Definition \ref{def:localmaximal}, at least one vertex of $f_i(v_1)$ and $f_i(v_2)$ is an internal vertex. Supposing that $f_i(v_1)$ is an internal vertex, $f_j(v_1)$ should also be an internal vertex, so $\overrightarrow {v_1v_2}\in f_j^{-1}(PM_j)$. In the same way, we can prove that $\forall \overrightarrow {v_1v_2} \in f_j^{-1}(PM_j)$, $\overrightarrow {v_1v_2} \in f_i^{-1}(PM_i)$. Hence, the edge set of $f_i^{-1}(PM_i)$ is equal to the edge set of $f_j^{-1}(PM_j)$.

In conclusion, $f_i^{-1}(PM_i)=f_j^{-1}(PM_j)$.
\end{proof}

Based on the above theorem, we can avoid exhaustive enumerations among irrelevant local partial matches with the same crossing edges that do not contribute to the final matches and result in significant data communication. Our strategy explores the intrinsic structural characteristics of the local partial matches only to generate combinations. If a generated combination cannot contribute to a valid match, we can filter out the local partial matches corresponding to the combination. To define the combination of multiple local partial matches, we first define the concept of a \emph{local partial match equivalence relation} as follows.

\begin{definition}\label{def:LPMEquivalenceRelation}\textbf{(Local Partial Match Equivalence Relation)}
Let $\Omega$ denote all local partial matches and $\sim$ be an equivalence relation over all local partial matches in $\Omega$ such that $PM_i$ $\sim$ $PM_j$ if $PM_i$(with function $f_i$) and $PM_j$(with function $f_j$) satisfy the following three conditions:
\begin{enumerate}
\item $PM_i$ and $PM_j$ are from the same fragment $F_k$;
\item $\forall \overrightarrow {u_iu_j}\in PM_i(or \ PM_j)$, if $\overrightarrow {u_iu_j}\in  E_k^c$, $\overrightarrow {u_iu_j}\in PM_j(or \ PM_i)$; and
  \item $\forall \overrightarrow {u_iu_j}\in PM_i(or \ PM_j)$, if $\overrightarrow {u_iu_j} \in E_k^c$, $f_i^{-1}(u_i) =f_j^{-1}(u_i)$ and $f_i^{-1}(u_j) =f_j^{-1}(u_j)$.
\end{enumerate}
\end{definition}

Based on the above equivalence relation, all local partial matches equivalent to a local partial match $PM_i$ can be combined together to form the \emph{L}ocal partial match \emph{E}quivalence \emph{C}lass (LEC) of $PM_i$ as follows.

\begin{definition}\label{def:LPMEquivalenceClass}\textbf{(Local Partial Match Equivalence Class)}
The \emph{local partial match equivalence class (LEC)} of a local partial match $PM_i$ is denoted $[PM_i]$, and is defined as the set
\[ [PM_i]=\{PM_j\in \Omega \ | \ PM_j \ \sim \ PM_i\}\]
\end{definition}

Then, we can prove that if two local partial matches can join together, then all other local partial matches in the corresponding LECs of the two local partial matches can also join together. Put another way, we only need to select one local partial match of a LEC as a representative to check whether all local partial matches in the LEC can join with other local partial matches. This prunes out many permutations of joining local partial matches of two LECs.

\begin{theorem}\label{theorem:LECEquivalence} Given two LECs $[PM_i]$ and $[PM_j]$, if a local partial match $PM_i$ can join with a local partial match $PM_j$, then any local partial matches in $[PM_i]$ can join with any local partial matches in $[PM_j]$.
\end{theorem}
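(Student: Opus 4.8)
The plan is to show that whether two local partial matches from distinct fragments can join is determined entirely by their \emph{crossing-edge interfaces} — namely, the set of crossing edges each one contains, the query edge(s) each such crossing edge is mapped to, and the images of the endpoints of those crossing edges — and then to observe that this interface is invariant within an LEC, so replacing $PM_i$ by any $PM_i'\in[PM_i]$ and $PM_j$ by any $PM_j'\in[PM_j]$ cannot destroy a successful join.

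First I would make the join notion of \cite{DBLP:journals/corr/PengZOCZ14} precise for two local partial matches $PM_a$ from $F_k$ (with $f_a$) and $PM_b$ from $F_l$ (with $f_b$), $k\neq l$: they join exactly when they share at least one crossing edge, every crossing edge lying in both is mapped to the same query edge in both (equivalently, $f_a^{-1}$ and $f_b^{-1}$ agree on its endpoints), and the assignment obtained by overlaying $f_a$ and $f_b$ is consistent and yields a legal combined (partial or complete) match of $Q$. The key structural fact, which I would establish next, is that this depends only on the interface: since the fragments are vertex-disjoint (Definition \ref{def:internaldistributedgraph}, item~1), two local partial matches from different fragments can only share vertices of $V_k^e\cap V_l$ and $V_l^e\cap V_k$, and every extended vertex of a fragment is an endpoint of one of its crossing edges (Definition \ref{def:internaldistributedgraph}, item~4); hence the overlaying of $f_a$ and $f_b$ is governed solely by the crossing edges and their endpoint images. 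Moreover, conditions~5 and~6 of Definition \ref{def:localmaximal} ensure that the internal portion of each local partial match is already ``closed off'' (every internal query vertex has its full internal neighbourhood matched, along a weakly connected internal path), so the internal parts of $PM_a$ and $PM_b$ cannot interact with each other beyond the interface.

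Then I would invoke Theorem \ref{theorem:LPMIsomorphism}. For $PM_i'\in[PM_i]$ with function $f_{i'}$: by conditions~2--3 of Definition \ref{def:LPMEquivalenceRelation}, $PM_i'$ contains exactly the same crossing edges as $PM_i$, and for each such edge $\overrightarrow{u u'}$ we have $f_{i'}^{-1}(u)=f_i^{-1}(u)$ and $f_{i'}^{-1}(u')=f_i^{-1}(u')$; by Theorem \ref{theorem:LPMIsomorphism} we also get $f_{i'}^{-1}(PM_i')=f_i^{-1}(PM_i)$, so $PM_i$ and $PM_i'$ are homomorphic to the same subgraph of $Q$ and expose an identical interface. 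The same holds for any $PM_j'\in[PM_j]$ relative to $PM_j$. Combining this with the structural fact above: since $PM_i$ joins with $PM_j$, all the interface-level conditions hold, and these conditions are unchanged when $PM_i,PM_j$ are replaced by $PM_i',PM_j'$ (same crossing edges, same query-edge images, same endpoint images); therefore $PM_i'$ joins with $PM_j'$. To handle the requirement that the merged object itself be a valid combined match, I would note that the merge $PM_i'\uplus PM_j'$ maps onto the same query subgraph $f_i^{-1}(PM_i)\cup f_j^{-1}(PM_j)$ as $PM_i\uplus PM_j$ and agrees with it on every extended vertex, hence inherits legality and any extendability to a complete match.

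I expect the main obstacle to be precisely this last point: pinning down the exact semantics of ``join'' (in particular, whether it merely asks for a consistent overlay or additionally demands extendability to a full match) and verifying rigorously that legality of a merge is a function of the interface alone and never of the internal portions of the two operands — this is where conditions~5 and~6 of Definition \ref{def:localmaximal} must be used carefully, together with the vertex-disjointness of fragments. A secondary point worth a remark is the case where a complete match is assembled from local partial matches of more than two fragments; this is outside the literal (pairwise) statement of the theorem, but it reduces to the pairwise case by the associativity of the merge operation.
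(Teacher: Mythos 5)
Your proposal is correct and follows essentially the same route as the paper: both reduce joinability to the three pairwise conditions from the prior work (different fragments, a shared crossing edge mapped to the same query edge, no conflicting query-vertex assignments) and observe that these are determined entirely by the crossing-edge interface, which Definition \ref{def:LPMEquivalenceRelation} together with Theorem \ref{theorem:LPMIsomorphism} shows is invariant across an LEC, with vertex-disjointness of fragments ruling out conflicts on internal vertices. Your extra worry about extendability to a complete match is not part of the paper's (purely pairwise) join notion and is handled separately in Theorem \ref{theorem:JoiningLECRes}.
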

\begin{proof}
As discussed in \cite{DBLP:journals/corr/PengZOCZ14}, if $PM_i$ and $PM_j$ can join together, then they are generated from different fragments, they share at least one common crossing edge that corresponds to the same query edge, and the same query vertex cannot be matched by different vertices in them.

Because $PM_i$ and $PM_j$ are from different fragments, according to Definition \ref{def:LPMEquivalenceRelation}, any local partial match in $[PM_i]$ is generated from different fragments from any local partial match in $[PM_j]$. Furthermore, all local partial matches in $[PM_i]$ (or $[PM_j]$) contain the same crossing edges that map to the same query edges, so any local partial match in $[PM_i]$ (or $[PM_j]$) shares at least one common crossing edge with any local partial match in $[PM_j]$ (or $[PM_i]$).

In addition, as our fragmentation is vertex-disjoint, the query vertices that the internal vertices in $PM_i$ map to should be different from the query vertices mapped to by the internal vertices in $PM_j$.
Hence, the internal vertices in any local partial match of $[PM_i]$ (or $[PM_j]$) cannot conflict with the internal vertices that any local partial match of $[PM_j]$ (or $[PM_i]$) map to. In addition, as the crossing edges in $PM_i$ does not conflict with the crossing edges in $PM_j$ and Definition \ref{def:LPMEquivalenceRelation} defines that the local partial matches in the same LEC share the same crossing edges and their mappings, the extended vertices in any local partial match of $[PM_i]$ (or $[PM_j]$) cannot conflict with the vertices that any local partial match of $[PM_j]$ (or $[PM_i]$) map to.

In summary, any two local partial matches in $[PM_j]$ and $[PM_i]$ meet all conditions that two joinable local partial matches should meet. Hence, the theorem is proven.
\end{proof}

\begin{example}\label{example:LECs}
Given all local partial matches in Fig \ref{fig:lpm}, there are seven LECs as follows.

$F_1:[PM_1^1]=\{PM_1^1\};\ [PM_1^2]=\{PM_1^2\},[PM_1^3]=\{PM_1^1\};$

$F_2:[PM_2^1]=[PM_2^2]=\{PM_2^1,PM_2^2\}, [PM_2^3]=\{PM_2^3\};$

$F_3:[PM_3^1]=\{PM_3^1\}, [PM_3^2]=\{PM_3^2\};$

As $PM_1^1$ can join with $PM_2^1$ and $PM_2^1$ and $PM_2^2$ are in the same LEC, $PM_1^1$ can also join with $PM_2^2$.$\Box$
\end{example}

\subsection{LEC Feature}\label{sec:DataStructure}
Theorems \ref{theorem:LPMIsomorphism} and \ref{theorem:LECEquivalence} show that many local partial matches have the same structures, and can be combined together as a LEC to join with local partial matches of other LECs through their common crossing edges. The observations imply that we can only use the same structure of local partial matches in a LEC and the common crossing edges of the LEC to determine whether the local partial matches of the LEC can join with the local partial matches of other LECs.

Hence, given a LEC $[PM]$, we maintain it as a compact data structure called the \emph{LEC feature} that only contains the same structure of local partial matches in $[PM]$ and the common crossing edges of $[PM]$, as follows:
\begin{definition}\label{def:LECSign}\textbf{(LEC Feature)}
Given a local partial match $PM$ with function $f$ and its LEC $[PM]$, its \emph{LEC feature} $LF([PM])=\{F,g,LECSign\}$ consists of three components:
\begin{enumerate}
  \item The fragment identifier, $F$, that $PM$ is from;
  \item A function $g$, which maps crossing edge $\overrightarrow {u_iu_j}$ in $PM$ to its corresponding mapping $\overrightarrow {f^{-1}(u_i)f^{-1}(u_j)}$ in $E^Q$; and
  \item A bitstring of the length $|V^Q|$, $LECSign$, where the i-th bit is set to `1' if $f(v_i)$ maps to an internal vertex of $F$.
\end{enumerate}
\end{definition}

Fig. \ref{fig:LF} shows a LEC feature $LF([PM^1_1])$ for the LEC $[PM^1_1]$ shown in Example \ref{example:LECs}. In $LF([PM^1_1])$, $F_1$ is the fragment identifier of the fragment that $PM^1_1$ is generated from, and $\{\overrightarrow{001,006} \to \overrightarrow{v_3v_1}\}$ is the set of crossing edges in $PM^1_1$ and their corresponding query edges; as the internal vertices in $PM^1_1$ match the query vertices $v_3$ and $v_5$ that correspond to the third and fifth bits of $LECSign$, the $LECSign$ in $LF([PM^1_1])$ is $[00101]$.

\begin{example}\label{example:LFs}
Given the LECs in Example \ref{example:LECs}, their LEC features are as follows:

$LF([PM_1^1])=\{F_1,\{\overrightarrow{001,006} \to \overrightarrow{v_3v_1}\},[00101]\}$

$LF([PM_1^2])=\{F_1,\{\overrightarrow{001,012} \to \overrightarrow{v_3v_1}\},[00101]\}$

$LF([PM_1^3])=\{F_1,\{\overrightarrow{006,005} \to \overrightarrow{v_1v_2}\},[01010]\}$

$LF([PM_2^1])=LF([PM_2^2])=\{F_2,\{\overrightarrow{001,006} \to \overrightarrow{v_3v_1}\},$ $[11010]\} $

$LF([PM_2^3])=\{F_2,\{\overrightarrow{006,005} \to \overrightarrow{v_1v_2}, \overrightarrow{001,006} \to \overrightarrow{v_3v_1}\},$ $[10000]\} $

$LF([PM_3^1])=\{F_3,\{\overrightarrow{001,012} \to \overrightarrow{v_3v_1}\},[11010]\}$

$LF([PM_3^2])=\{F_3,\{\overrightarrow{014,013} \to \overrightarrow{v_1v_2}\},[01010]\}$
$\Box$
\normalsize
\end{example}

\begin{figure}[h]
		\includegraphics[scale=0.425]{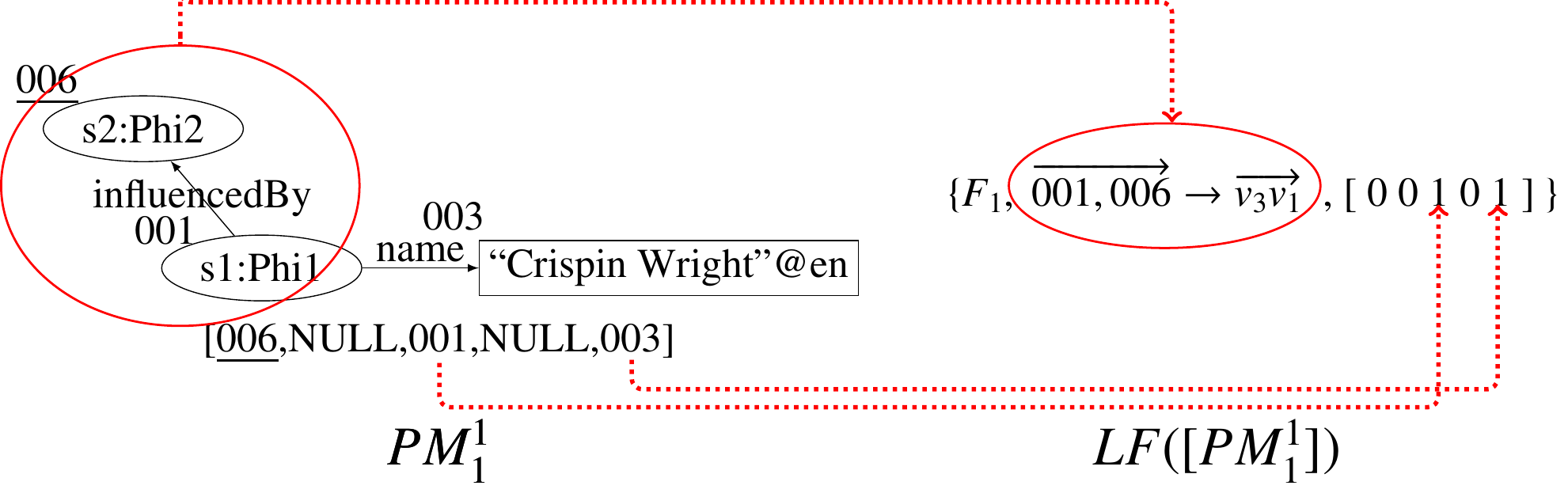}
\caption{LEC Feature $LF([PM_1^1])$ ($PM_1^1$ is the only element in $[PM_1^1]$)}
   \label{fig:LF}
\end{figure}

Given a SPARQL query $Q$ and a fragment $F_i$, we can find all LEC features (according to Definition \ref{def:localmaximal}) in $F_i$, and utilize them together to filter out some irrelevant local partial matches. In this study, we mainly focus on how to compress all local partial matches into LEC features. A high-level description of computing LEC features is outlined in Algorithm \ref{alg:findingLF}.

\begin{algorithm}[h] \label{alg:findingLF}
\caption{Computing LEC Features}
\KwIn{The set of all local partial matches in fragment $F_i$, denoted as $\Omega(F_i)$.}
\KwOut{The set of all LEC features in $F_i$, denoted as $\Omega(F_i)$, denoted as $\Psi(F_i)$.}
\For{each local partial match $PM$ in $\Omega(F_i)$}{
Initialize a LEC feature $LF$;\\
$LF.F$ $\gets$ $F_i$;\\
\For{each mapping $(\overrightarrow {u_iu_j},\overrightarrow {v_iv_j})$ in $PM$}
{
    \If{$u_i$ (or $u_j$) is an extended vertex of fragment $F_i$}
     {
     $LF.LECSign[i]$ (or $LF.LECSign[j]$) $\gets$ `0';\\
     $LF.g$ $\gets$ $LF.g$ $\cup$ $(\overrightarrow {u_iu_j},\overrightarrow {v_iv_j})$;
     }
     \Else
     {
     $LF.LECSign[i]\gets$ `1' and $LF.LECSign[j]\gets$ `1';
     }
}
\If{$\Psi(F_i)$ does not contain $LF$}
 {
 $\Psi(F_i)$ $\gets$ $\Psi(F_i)$ $\cup$ $LF$;
 }
}
Return $\Omega(F_i)$;
\end{algorithm}

The above process consists of determining what the LEC feature of a local partial match $PM$ is. We first initialize a LEC feature $LF$ with the fragment identifer $F_i$. Then, we scan all mappings in $PM$. For each mapping $(\overrightarrow {u_iu_j},\overrightarrow {v_iv_j})$, if $u_i$ (or $u_j$) is an extended vertex, we set $LF.LECSign[i]$ (or $LF.LECSign[j]$) as `0', and $LF.LECSign[i]$ (or $LF.LECSign[j]$) as `1' and add $(\overrightarrow {u_iu_j},\overrightarrow {v_iv_j})$ into $LF.g$. Last, we insert $LF$ into the set of all LEC features in $F_i$.
This above step iterates over each local partial match. Constructing all LEC features only requires a linear scan on the local partial matches; hence, it can be done on-the-fly because the local partial matches stream out from the evaluation.

\subsection{LEC Feature-based Pruning Algorithm}\label{sec:LECBasedPruneRules}
In this section, based on the definition of the LEC feature and its properties, we propose an optimization technique that prunes some irrelevant local partial matches.

First, we define the conditions under which two local partial matches can join together as Definition \ref{def:JoinableLEC}, and prove the correctness of the join conditions as Theorem \ref{theorem:JoinableLEC}.

\begin{definition}\label{def:JoinableLEC}\textbf{(Joinable)}
Given two local partial matches $PM_i$ and $PM_j$, they are joinable if their LEC features $LF([PM_i])$ and $LF([PM_j])$ meet the following conditions:
\begin{enumerate}
  \item $LF([PM_i]).F \ne LF([PM_j]).F$;
  \item There exist at least one edge $\overrightarrow{u_iu_j}$, such that $LF([PM_i]).g(\overrightarrow{u_iu_j})$ $=LF([PM_j]).g(\overrightarrow{u_iu_j})$;
  \item There exist no two edges $\overrightarrow{u_iu_j}$ and $\overrightarrow{u_i^\prime u_j^\prime}$ in the domains of $LF([PM_i]).g$ and $LF([PM_j]).g$, respectively, such that $LF([PM_i]).g(\overrightarrow{u_iu_j})=LF([PM_j]).g(\overrightarrow{u_i^\prime u_j^\prime})$; and
  \item All bits in $LF([PM_i]).LECSign \land LF([PM_j]).LECSign$ are `0'.
\end{enumerate}
\end{definition}

\begin{theorem}\label{theorem:JoinableLEC} Given two LEC $[PM_i]$ and $[PM_j]$, if the LEC features of $[PM_i]$ and $[PM_j]$ are joinable, then any local partial match in $[PM_i]$ can join with any local partial match in $[PM_j]$.
\end{theorem}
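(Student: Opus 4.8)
The plan is to translate the four abstract conditions of Definition \ref{def:JoinableLEC} into the concrete join conditions on local partial matches established in \cite{DBLP:journals/corr/PengZOCZ14} (and recalled in the proof of Theorem \ref{theorem:LECEquivalence}), and then to invoke Theorem \ref{theorem:LECEquivalence} to extend joinability from a single representative pair to all members of the two LECs. Recall that two local partial matches $PM$ and $PM'$ can join iff: (a) they come from different fragments; (b) they share at least one common crossing edge mapped to the same query edge; (c) no two distinct crossing edges of $PM$ and $PM'$ map to the same query edge (so the merged crossing-edge mapping is well defined); and (d) no query vertex is matched to two different data vertices in $PM$ and $PM'$. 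So the core of the argument is: ``LEC features joinable'' $\Rightarrow$ ``the representatives $PM_i$ and $PM_j$ are joinable in the sense of \cite{DBLP:journals/corr/PengZOCZ14}.''

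First I would handle (a)–(c), which are essentially bookkeeping. Condition 1 of Definition \ref{def:JoinableLEC} gives $LF([PM_i]).F \neq LF([PM_j]).F$, which is exactly (a) since the fragment identifier in the LEC feature is the fragment the local partial matches are drawn from. Condition 2 says some crossing edge $\overrightarrow{u_iu_j}$ has $LF([PM_i]).g(\overrightarrow{u_iu_j}) = LF([PM_j]).g(\overrightarrow{u_iu_j})$; since $g$ records the query edge a crossing edge maps to, and crossing edges are stored with replicas so a shared crossing edge has the same endpoints at both fragments, this yields (b). Condition 3 is the direct statement of (c) for the domains of $LF([PM_i]).g$ and $LF([PM_j]).g$. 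The only subtlety here is to confirm that the crossing edges appearing in $g$ are exactly the crossing edges of the underlying local partial matches (which holds by the construction in Algorithm \ref{alg:findingLF}), and that ``same crossing edge'' at two fragments really is the same directed edge with the same label — this follows from the replication of crossing edges in Definition \ref{def:internaldistributedgraph}.

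The main obstacle is condition 4, the $LECSign$ conjunction being all-zero, which must be shown to imply the no-conflict condition (d). The argument splits a potential conflict vertex $v$ by where it is matched. If $v$ is matched to an internal vertex in both $PM_i$ and $PM_j$, then bit $v$ is `1' in both $LECSign$s, contradicting the all-zero conjunction; moreover by vertex-disjoint partitioning (Definition \ref{def:internaldistributedgraph}, clause 1) internal vertices of different fragments are distinct data vertices, so this case is excluded. If $v$ is matched to an internal vertex in one (say $PM_i$) and an extended vertex in the other ($PM_j$), then the extended vertex of $F_{k_j}$ that $v$ maps to must be an endpoint of a crossing edge in $PM_j$; by Theorem \ref{theorem:LPMIsomorphism} the subgraph $f^{-1}$ is fixed across the LEC, and by condition 2/3 the shared crossing edges force $v$'s image to coincide, so the matched data vertices agree rather than conflict. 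If $v$ is matched to extended vertices in both, the same crossing-edge consistency (conditions 2 and 3, together with the fact that extended vertices are precisely crossing-edge endpoints) forces agreement. So in every case there is no genuine conflict, giving (d).

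Having established that $PM_i$ and $PM_j$ satisfy the four join conditions of \cite{DBLP:journals/corr/PengZOCZ14}, they can join together. Finally I would apply Theorem \ref{theorem:LECEquivalence} verbatim: since $PM_i$ can join with $PM_j$, any local partial match in $[PM_i]$ can join with any local partial match in $[PM_j]$, which is the claim. I expect the write-up to be short, with essentially all the work concentrated in the case analysis for condition 4 and in carefully citing the replication of crossing edges to justify that ``$g$-images coincide'' really does pin down the same data vertices.
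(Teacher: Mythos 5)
Your proposal matches the paper's own argument: the paper likewise proves the theorem by translating the four conditions of Definition~\ref{def:JoinableLEC} one-for-one into the join conditions of \cite{DBLP:journals/corr/PengZOCZ14} (different fragments, a shared crossing edge with the same query-edge image, no conflicting crossing-edge mappings, no conflicting internal-vertex mappings) and then concluding that every pair of members of the two LECs satisfies them. Your extra detail --- the explicit case analysis for condition 4 and the final appeal to Theorem~\ref{theorem:LECEquivalence} to pass from the representatives to the whole classes --- is a slightly more careful write-up of the same route, not a different one.
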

\begin{proof}
Due to Condition 1 of Definition \ref{def:JoinableLEC}, any local partial match in $[PM_i]$ is generated from different fragments than any local partial match in $[PM_j]$ is generated from. Condition 2 of Definition \ref{def:JoinableLEC} means that any local partial matches in $[PM_i]$ shares at least one common crossing edge mapping to the same query edge with any local partial matches in $[PM_j]$. Condition 3 of Definition \ref{def:JoinableLEC} implies that the same query vertex cannot be matched by different vertices in crossing edges of local partial matches in $[PM_i]$ and $[PM_j]$. Condition 4 of Definition \ref{def:JoinableLEC} means that the same query vertex cannot be matched by different internal vertices edges of local partial matches in $[PM_i]$ and $[PM_j]$.

In summary, all conditions of Definition \ref{def:JoinableLEC} imply that all local partial matches in $[PM_i]$ and $[PM_j]$ meet all joining conditions discussed in \cite{DBLP:journals/corr/PengZOCZ14}. Hence, any local partial match in $[PM_i]$ can join with any local partial match in $[PM_j]$.
\end{proof}

Further, we prove in the following theorem that only by using all LEC features can we determine whether the local partial matches of a LEC can contribute to the complete matches.
\begin{theorem}\label{theorem:JoiningLECRes} Given $m$ ($m\le |V^Q|$) local partial matches $PM_1,PM_2,$ $..., PM_m$, the local partial matches can join together to form a match of $Q$ if their corresponding LEC features meet the following conditions:
\begin{enumerate}
  \item For any $PM_i$, there exists a local partial match $PM_j$ $(j\ne i)$ that $[PM_i]$ and $[PM_j]$ are joinable;
  \item $\forall 1\le i\ne j \le m$, all bits in $LF([PM_i]).LECSign\land LF([PM_j]).$ $LECSign$ are `0'; and
  \item All bits in $LF([PM_1]).LECSign\lor LF([PM_2]).LECSign$ $\lor ... \lor LF([PM_m]).LECSign$ are `1'.
\end{enumerate}
\end{theorem}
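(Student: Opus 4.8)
The plan is to show that the three LEC-feature conditions are sufficient to guarantee that the $m$ local partial matches $PM_1,\dots,PM_m$ can be assembled into a complete match $M$ of $Q$ over $G$. The strategy is to reduce everything to the join conditions already established in Theorem~\ref{theorem:JoinableLEC} and then argue that the union of the $m$ pieces is a well-defined match covering all query vertices. First I would invoke Theorem~\ref{theorem:JoinableLEC} together with Condition~1: since for every $PM_i$ there is some $PM_j$ whose LEC feature is joinable with $LF([PM_i])$, each $PM_i$ genuinely participates in at least one pairwise join, so no piece is isolated. Condition~2 strengthens this globally: for \emph{every} pair $i\ne j$, the bitwise AND of the $LECSign$s is all zeros, which by the meaning of $LECSign$ (the $t$-th bit is `1' iff $v_t$ maps to an internal vertex of $F_i$) means no query vertex is matched to an internal vertex in two different pieces. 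Combined with vertex-disjointness of the fragmentation, this ensures the internal-vertex mappings across the $m$ pieces are mutually consistent — no conflicts.

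Next I would handle the crossing-edge mappings. Any query vertex that is matched in $PM_i$ but not to an internal vertex of $F_i$ must be matched to an extended vertex lying on a crossing edge; by Definition~\ref{def:LECSign} these are exactly the edges recorded in $g$. Using Condition~1 / Theorem~\ref{theorem:JoinableLEC} pairwise, whenever two pieces share a query edge in their $g$-domains they agree on the endpoint mappings (join condition 3 of Definition~\ref{def:JoinableLEC}), so the extended-vertex images stitch together consistently and coincide with the internal-vertex image in whichever piece treats that vertex internally. Thus the union $f := \bigcup_i f_i$ (restricted to non-NULL values) is a well-defined function from a subset of $V^Q$ into $V(G)$; I would verify it satisfies the three structural conditions of Definition~\ref{def:sparqlmatch} edge by edge, noting that every query edge $\overrightarrow{v_sv_t}$ is, by condition~3 of Definition~\ref{def:localmaximal}, realized as a real edge in some $PM_i$ (the one where at least one endpoint is internal), hence realized in $G$.

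Finally I would use Condition~3: the OR of all $LECSign$s being all `1' says every query vertex $v_t$ is matched to an internal vertex of \emph{some} fragment, so $f$ is total on $V^Q$ and every query vertex is covered. Together with the edge-realization argument this exhibits $M = \bigcup_i f_i^{-1}(PM_i)$'s image as a subgraph of $G$ homomorphic to $Q$, i.e.\ a match. I expect the main obstacle to be the bookkeeping in the middle step: verifying that the per-pair join guarantees of Theorem~\ref{theorem:JoinableLEC} actually glue into a \emph{globally} consistent single function, rather than just pairwise-compatible fragments — in principle one must rule out a ``cyclic'' inconsistency where $PM_1,PM_2$ agree, $PM_2,PM_3$ agree, but the composite forces a clash. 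Here Condition~2 (global, not just pairwise, disjointness of internal-vertex bits) plus vertex-disjoint fragmentation is what closes the gap, since each query vertex has a unique ``owner'' fragment determining its image, so I would make that ownership argument explicit and let the pairwise agreements merely confirm that the extended-vertex copies in non-owner pieces match the owner's value.
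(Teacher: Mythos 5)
Your proposal is correct and follows essentially the same route as the paper's proof: Conditions 1 and 2 are used to guarantee pairwise joinability and global consistency of the pieces, and Condition 3 is used to argue that every query vertex is internal to some $PM_i$, whence (by condition 5 of Definition~\ref{def:localmaximal}, which is the clause you actually need rather than condition 3) all of its adjacent query edges are realized, so the union is a complete match. Your explicit treatment of the ``owner fragment'' gluing argument is a more careful version of a step the paper simply asserts, but it is the same underlying argument.
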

\begin{proof}
Here, we prove that if the three conditions in Theorem \ref{theorem:JoiningLECRes} are met, then $PM_1 \Join PM_2 \Join ... \Join PM_m$ is a match of $Q$.

Conditions 1 and 2 in Theorem \ref{theorem:JoiningLECRes} guarantees that the $m$ local partial matches can join together. Condition 3 in Theorem \ref{theorem:JoiningLECRes} means that each vertex $u$ in $PM_1 \Join PM_2 \Join ... \Join PM_m$ is an internal vertex of one local partial match $PM_i$ ($i\le m$).
As $u$ is an internal vertex in $PM_i$, all of $u$'s adjacent edges have been matched. Then, we can know all edges in $PM_1 \Join PM_2 \Join ... \Join PM_m$ have been matched. Hence, $PM_1 \Join PM_2 \Join ... \Join PM_m$ is a match of $Q$.
\end{proof}

Theorem \ref{theorem:JoiningLECRes} implies that we only need assemble all LEC features to determine which local partial matches can contribute to the complete match.
Only when all bits in $LECSign$ of the joined result of some LEC features are `1' can the corresponding local partial matches join to form a SPARQL match.

Therefore, 
we can assemble all LEC Features and merge them together to prune some irrelevant local partial matches. If a LEC feature cannot contribute to a union result of some LEC features' $LECSign$ where all bits are `1', then all local partial matches corresponding to the LEC feature can be pruned.

The straightforward approach of merging all LEC features is to check whether each pair of LEC features are joinable. However, the join space of the straightforward approach is very large; hence, we propose a partitioning-based optimized technique to reduce the join space. The intuition of our partitioning-based technique is that we divide all LEC features into multiple groups, such that two LEC features in the same group cannot be joinable. Then, we only consider joining LEC features from different groups.

\begin{theorem}\label{theorem:pmpartition} Given two LEC features $LF_i$ and $LF_j$, if $LF_i.LECSign=LF_j.LECSign$, $LF_i$ and $LF_j$ are not joinable.
\end{theorem}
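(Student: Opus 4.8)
The plan is to prove the contrapositive of the joinability criterion packaged in Definition~\ref{def:JoinableLEC}: I will show that if $LF_i.LECSign = LF_j.LECSign$, then Condition~4 of Definition~\ref{def:JoinableLEC} must fail, so $LF_i$ and $LF_j$ cannot be joinable. Recall that Condition~4 requires every bit of $LF_i.LECSign \land LF_j.LECSign$ to be \emph{`0'}, i.e.\ the two bitstrings must have disjoint supports (no common position set to \emph{`1'}).

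First I would observe that a $LECSign$ is never the all-zero bitstring. Indeed, by Definition~\ref{def:localmaximal}, every local partial match contains at least one crossing edge, and more importantly each $LEC$ contains internal vertices of its fragment: the $i$-th bit of $LECSign$ is \emph{`1'} exactly when $f(v_i)$ maps to an internal vertex of the fragment (Definition~\ref{def:LECSign}). By condition~5 of Definition~\ref{def:localmaximal}, whenever a query vertex maps to an internal vertex the whole connected neighbourhood is forced to be matched, and the proof of Theorem~\ref{theorem:LPMIsomorphism} already shows that a local partial match cannot consist solely of extended vertices on a crossing edge --- there is always an internal endpoint. Hence at least one bit of $LF_i.LECSign$ equals \emph{`1'}.

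Then I would finish by a one-line argument: let $b$ be any position with $LF_i.LECSign[b] = $ \emph{`1'}. Since $LF_j.LECSign = LF_i.LECSign$ by hypothesis, we also have $LF_j.LECSign[b] = $ \emph{`1'}, so the $b$-th bit of $LF_i.LECSign \land LF_j.LECSign$ is \emph{`1'}. This violates Condition~4 of Definition~\ref{def:JoinableLEC}, and therefore $LF_i$ and $LF_j$ fail to be joinable. Since Theorem~\ref{theorem:JoinableLEC} (and the joinability definition it rests on) is the only channel by which local partial matches in $[PM_i]$ and $[PM_j]$ could join, no local partial match of the $LEC$ with feature $LF_i$ can join with any local partial match of the $LEC$ with feature $LF_j$.

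The only real subtlety --- the ``hard part'' --- is justifying rigorously that $LECSign$ is always nonzero; everything else is bookkeeping on bit operations. I would lean on condition~5 of Definition~\ref{def:localmaximal} together with the structural facts extracted in the proof of Theorem~\ref{theorem:LPMIsomorphism} (namely that a crossing edge in a local partial match always has an internal endpoint, since a purely-extended edge would be handled by condition~3's ``both endpoints in $V_k^e$'' clause but then condition~4's crossing-edge-presence requirement plus condition~6's weak-connectivity requirement force some internal vertex to appear). If one wanted to be maximally careful, one could even restrict attention to the case where $LF_i$ and $LF_j$ come from different fragments, since otherwise Condition~1 of Definition~\ref{def:JoinableLEC} already rules out joinability and the claim is immediate; the substantive content is precisely the same-fragment-ruled-out-by-signatures-being-equal scenario, which is what the bit argument above settles.
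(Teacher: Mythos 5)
Your proof is correct and follows essentially the same route as the paper's: observe that equal $LECSign$s make the conjunction equal to their common value, argue that this value is nonzero because every local partial match contains a crossing edge and hence an internal vertex (so some bit is `1'), and conclude that Condition~4 of Definition~\ref{def:JoinableLEC} is violated. The paper justifies the nonzero step directly from Condition~4 of Definition~\ref{def:localmaximal}, whereas you route it through Condition~5 and the proof of Theorem~\ref{theorem:LPMIsomorphism}, but the substance is identical.
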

\begin{proof}
Because $LF_i.LECSign$ is equal to $LF_j.LECSign$, $LF_i.$ $LECSign\land LF_j.LECSign = LF_i.LECSign = LF_j.$ $LECSign$. According to Condition 4 of Definition \ref{def:localmaximal}, there is at least one internal vertex in a local partial match, so there is at least one `1' in $LF_i.LECSign$ and $LF_j.LECSign$. Therefore, there is at least one `1' in $LF_i.LECSign\land LF_j.LECSign$, which is in conflict with Condition 4 of Definition \ref{def:JoinableLEC}.
\end{proof}

\begin{definition}\label{def:LFpartition} \textbf{(LEC Feature Group)}
Let $\Psi$ denote all LEC features. $\mathcal{P}=\{P_1,...,P_n\}$ is a set of \emph{LEC feature groups} for $\Psi$ if and only if each group $P_i$ ($i=1,...,n$) consists of a set of LEC features all having the same $LECSign$.
\end{definition}

\begin{example}\label{example:LFPartitioning}
Given all LEC features in Example \ref{example:LFs}, the LEC feature groups $\{P_1,P_2,P_3,P_4,P_5\}$ are as follows.

$P_1=\{LF([PM_1^1]),LF([PM_1^2])\}$,

$P_2=\{LF([PM_1^3])\},P_3=\{LF([PM_2^1]),LF([PM_3^1])\}$,

$P_4=\{LF([PM_2^3])\}, P_5=\{LF([PM_3^2])\}$.
$\Box$
\end{example}

Given a set $\mathcal{P}$ of LEC feature groups, we build a \emph{join graph} (denoted as $JG=\{V^{JG},E^{JG}\}$) as follows. In a join graph, one vertex indicates a LEC feature group. We introduce an edge between two vertices in the join graph if and only if some of their corresponding LEC feature groups can be joinable. Fig. \ref{fig:myjoingraph} shows the join graph of $\mathcal{P}$.

\begin{figure}[h]
\centering
		\includegraphics[scale=0.5]{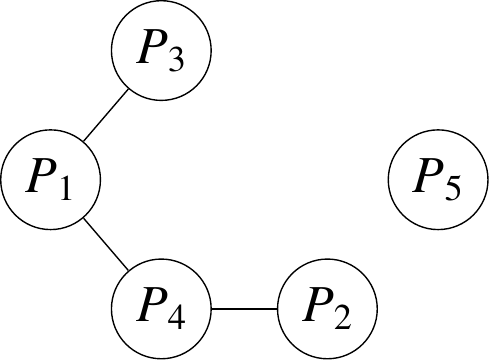}
   \caption{Join Graph}
   \label{fig:myjoingraph}
\end{figure}

We propose an algorithm (Algorithm \ref{alg:LFPruning}) based on a depth-first search (DFS) traversal over the join graph, to filter out the irrelevant LEC features.
For example, $P_5=LF([PM_3^2])$ in our example can be filtered out after we execute Algorithm \ref{alg:LFPruning}.

\begin{algorithm}[h] \label{alg:LFPruning}
\caption{LEC Feature-based Pruning Algorithm}

\KwIn{A set $\mathcal{P}=\{P_{1},...,P_{n}\}$ of LEC feature groups and the join graph $JG$ }
\KwOut{The set $RS$ of LEC features that can contribute to complete matches}

$RS\gets \emptyset$;\\
\While{$V^{JG}\ne \emptyset$}{
    Find the vertex $v_{min}\in V^{JG}$ corresponding to LEC feature group $P_{min}$, where $P_{min}$ has the smallest size;\\
    Call Function \textbf{ComLECFJoin}($\{v_{min}\},P_{min},JG,RS$);\\
    Remove $v_{min}$ from $V^{JG}$;\\
    Remove all outliers remaining in $JG$;
}
\end{algorithm}

\begin{function}[h] \label{alg:functioncomLF}
\small
\caption{ComLECFJoin($V,P,JG,RS$)}
\For{each vertex $v$ in $JG$ adjacent to at least one vertex in $V$, where $v$ corresponds to LEC feature group $P^\prime$}
{
    Set $P^{\prime\prime}$ $\gets$  $\emptyset$; \\
    \For{each LEC feature $LF_i$ in $P$}
    {
      \For{each LEC feature $LF_j$ in $P^\prime$}
      {
      \If{$LF_i$ and $LF_j$ are joinable}
      {
         $LF_k\gets LF_i \Join LF_j$;\\
         \If{all bits in $LF_k.LECSign$ are `1'}
         {
         Insert all LEC features corresponding to vertices in $V$ into $RS$;
         }
         \Else
         {
         Put $LF_k$ into $P^{\prime\prime}$;
         }
       }
      }
    }

   Call Function \textbf{ComLECFJoin}($V\cup \{v\},P^{\prime\prime},JG$);\\
}
\end{function}

\subsection{Analysis}\label{sec:Analysis}
To analyze the complexity of the above optimization technique, we consider the communication and computation costs. The communication cost is the data shipment needed in the above optimization technique, whereas the computation cost is the response time needed for evaluating the query at different sites in parallel. In general, our method can guarantee the following:

\emph{\underline{Communication cost}}. As discussed previously, our optimization technique assembles the LEC features to prune the local partial matches. A general formula for determining the communication cost can be specified as
$Cost_{LF}\times |\Psi |$,
where $Cost_{LF}$ is the size of a LEC feature, and $|\Psi |$ is the number of LEC features.

For any LEC feature $\{F,g,LECSign\}$, its cost, $Cost_{LF}$, consists of three components. The first component is the cost of the fragment identifier $F$, which is a constant. The second component is the cost of the function $g$ mapping the crossing edges to the query edges. The number of crossing edges is at most $|E^Q|$, so the complexity of $g$ is $O(|E^Q|)$. The last component, $LECSign$, is defined as a bitstring of fixed-length $|V^Q|$, so the cost of $LECSign$ is also $Q(|V^Q|)$. In summary, the cost of any LEC feature is $O(|E^Q|+|V^Q|)$.

In contrast, the number of LEC features, $|\Psi |$, only depends on the number of crossing edges in fragment $F_i$ , i.e., $|E^c_i|$, because of the LEC features only introduced by these crossing edges. In the worst case, each query edge can map to any edge in $E^c_i$, and then the number of LEC features is $O(|E^c_i|^{|E^Q|})$. Hence, the number of LEC features is $O(\sum_{i=1}^{|\mathcal{F}|}|E^c_i|^{|E^Q|})$.

Overall, the total communication cost is $O(\sum_{i=1}^{|\mathcal{F}|}|E^c_i|^{|E^Q|}\times (|E^Q|+|V^Q|))$. Thus, given a distributed RDF graph $G$, our optimization technique has the property that the communication cost of evaluating a query depends mainly on the size of the query and the partitioning of the graph.

\emph{\underline{Computation cost}}.There are two parts of our optimization technique: partial evaluation for computing LEC features, and assembly for joining LEC features to obtain the final answer. We discuss the costs of the two stages as follows:

First, computing local partial matches to determine LEC features is performed on each fragment $F_i$ in parallel, and it takes $O(|V_i\cup V_i^e|^{|V^Q|})$ time to compute all local partial matches for each fragment. Hence, it takes at most $O(|V_m\cup V_m^e|^{|V^Q|})$ time to get all LEC features from all sites, where $V_m\cup V_m^e$ is the vertex set of the largest fragment in $\mathcal{F}$.

Second, we only need to scan all LEC features once to partition them, so it takes $O(|\Psi |)$ to partition all LEC features. In addition, given a partitioning $\mathcal{P}=\{P_1,...,P_n\}$, joining all LEC features costs $\prod\nolimits_{i = 1}^{i = n} {|P_{i} |}$, which is bounded by $O((\frac{|\Psi|}{|V^Q|})^{|V^Q|})$. As discussed previously, $|\Psi |$ independent of the entire graph $G$; hence, the response time is also independent of $G$.

\vspace{0.1in}
In summary, the data shipment of our method depends on the size of query graph and the number of crossing edges only, and the response time of our method depends only on the size of query graph, the largest fragment, and the number of edges across different fragments. Thus, our method is \emph{partition bounded} in both \emph{data shipment} and \emph{response time} \cite{DBLP:journals/pvldb/FanWWD14}.

\nop{
In real applications, we can expect that the number of crossing edges in a partitioning will be small compared to the size of the graph itself, i.e., $\sum_{i=1}^{|\mathcal{F}|}|E^c_i| \ll |V|$. Furthermore, after we study the real SPAQRL query workload, the DBpedia query workload\footnote{http://aksw.org/Projects/DBPSB.html}, the size of a real SPARQL query is often smaller than ten edges. Last, we find out that the query edge in real SPARQL queries often only map to a limited number of edges in $E^c_i$.
}

\section{LEC Feature-based Assembly}\label{sec:Assembly}
After we gain all local partial matches, we need to assemble and join all them to form all complete matches. In this section, we discuss the join-based assembly of local partial matches to compute the final results. The join method proposed in \cite{DBLP:journals/corr/PengZOCZ14} is a partitioning-based join algorithm, where the local partial matches are divided into multiple partitions based on their internal candidates, such that two local partial matches in the same partitions cannot be joinable. All local partial matches in the same partition map to the internal vertices for a given variable. In \cite{DBLP:journals/corr/PengZOCZ14}, the authors prove that the local partial matches in the same partition cannot be joined.


The join space of the join algorithm in \cite{DBLP:journals/corr/PengZOCZ14} is still large. As discussed previously, we can determine whether two local partial matches in two different fragments can join according to their corresponding LEC features. Thus, we propose an optimized technique based on the LEC features of the local partial matches to join the local partial matches.


The intuition of our method is that we divide all local partial matches into multiple groups based on their LEC features as proved in Theorem \ref{theorem:pmpartition}, such that two local partial matches in the same group cannot be joinable. Then, we only consider joining local partial matches from different groups.


\nop{
\begin{theorem}\label{theorem:pmpartition} Given two local partial matches $PM_i$ and $PM_j$ and their LEC features $LF_i$ and $LF_j$, if $LF_i.LECSign$ is equal to $LF_j.LECSign$, $PM_i$ and $PM_j$ are not joinable.
\end{theorem}
\begin{proof}
Since $LF_i.LECSign$ is equal to $LF_j.LECSign$, $LF_i.$ $LECSign\land LF_j.LECSign = LF_i.LECSign = LF_i.$ $LECSign$. According to Condition 4 of Definition \ref{def:localmaximal}, there are at least one internal vertices in a local partial match, so there are at least one `1' in $LF_i.LECSign$ and $LF_i.LECSign$. Therefore, there are at least one `1' in $LF_i.LECSign\land LF_j.LECSign$, which is in conflict with Condition 4 of Definition \ref{def:JoinableLEC}.
\end{proof}
}

\begin{definition}\label{def:LFpartition} \textbf{(LEC Feature-based Local Partial Match Group)}
$\mathcal{G}=\{Gr_1,...,Gr_n\}$ is a set of local partial match groups for $\Omega $ if and only if each group $Gr_i$ ($i=1,...,n$) consists of a set of local partial matches and the corresponding LEC features of the local partial matches have the same $LECSign$.
\end{definition}

\begin{example}\label{example:LFPartitioning}
Given all local partial matches in Fig. \ref{fig:lpm}, after $PM_3^2$ is pruned during LEC feature-based optimization, the LECSign-based local partial match groups $\{Gr_1,Gr_2,Gr_3,Gr_4\}$ are as follows:

$Gr_1=\{PM_1^1,PM_1^2\},Gr_2=\{PM_1^3\}$,

$Gr_3=\{PM_2^1,PM_2^2,PM_3^1\},Gr_4=\{PM_2^3\}$
$\Box$
\end{example}

Given a set $\mathcal{G}$ of LECSign-based local partial match groups, we also build a \emph{local partial match group join graph} (denoted as $LG=\{V^{Gr},E^{Gr}\}$) as follows. In a join graph, one vertex indicates a LEC feature-based local partial match group. We introduce an edge between two vertices in the join graph if and only if some of their corresponding LEC features can be joinable.
Here, the join graph of $\mathcal{G}$ is shown in Fig. \ref{fig:joingraphLPMG}.

\begin{figure}[h]
\centering
		\includegraphics[scale=0.5]{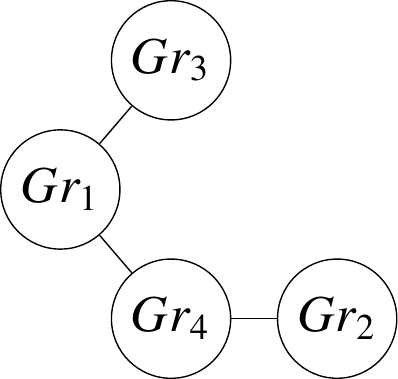}
   \caption{Local Partial Match Group Join Graph}
   \label{fig:joingraphLPMG}
\end{figure}

Then, we use Algorithm 3 based on the DFS traversal over the local partial match group join graph to get the complete matches.

\begin{algorithm}[h] \label{alg:advancedmerging}
\caption{LEC Feature-based Assembly Algorithm}

\KwIn{A set $\mathcal{G}=\{Gr_{1},...,Gr_{n}\}$ of LEC feature-based local partial match groups and its join graph $LG$ }
\KwOut{The set of complete matches, $MS$}
\While{$V^{Gr}\ne \emptyset$}{
    Find the vertex $v_{min}\in V^{Gr}$ corresponding to LEC feature-based local partial match group $Gr_{min}$, where $Gr_{min}$ has the smallest size;\\
    Call Function \textbf{ComParJoin}($\{v_{min}\},Gr_{min},LG, MS$);\\
    Remove $v_{min}$ from $V^{Gr}$;\\
    Remove all outliers remaining in $LG$;
}
Return false;
\end{algorithm}

\begin{function}[h] \label{alg:functioncom}
\small
\caption{ComParJoin($V,Gr,LG, MS$)}
\For{each vertex $v$ in $LG$ adjacent to at least one vertex in $V$, where $v$ corresponds to $Gr^\prime$}
{
    $Gr^{\prime\prime}$ $\gets$  $\emptyset$; \\
    \For{each local partial match $PM_i$ in $Gr$}
    {
      \For{each local partial match $PM_j$ in $Gr^\prime$}
      {
      \If{$PM_i$ and $PM_j$ are joinable}
      {
         $PM_k\gets PM_i \Join PM_j$;\\
         \If{all vertices in $PM_k$ are matched}
         {
         Put $PM_k$ into $MS$;
         }
         \Else
         {
         Put $PM_k$ into $Gr^{\prime\prime}$;
         }
       }
      }
    }

   Call Function \textbf{ComParJoin}($V\cup \{v\},Gr^{\prime\prime},LG,MS$);\\
}
\end{function}

\nop{
\begin{example}\label{example:PartitioningAssemblyExample}
Let us consider the set of LECSign-based local partial match groups in Example \ref{example:LFPartitioning} and its corresponding join graph in Figure \ref{fig:joingraph}. Assume that we first select $P_4$ to call function ComParJoin to join with other LEC feature groups. In function ComParJoin, $P_4$ first join with $P_2$ and generate a new LEC feature group as $P^{*}=\{\{F^{*},\{\overrightarrow{001,006} \to \overrightarrow{v_3v_1}\},[11010]\}\}$, where $F^{*}$ is a fragment identifer different from any existing fragment identifer. Then, $P^{*}$ is selected as the parameter to call function ComParJoin and joins with LEC feature group $P_1$. After $P^{*}$ joins with $P_1$, the algorithm finds out the Boolean query answer and returns true.

If calling function ComParJoin for $P_4$ does not find out the final answer, $P_4$ is removed from the join graph. Then, $P_2$ becomes an outlier and is also removed from the join graph. Thus, the join graph only remains $P_1$ and $P_3$. We assume that $P_3$ is selected to call function ComParJoin. $P_3$ and $P_1$ will join together to find answers.
\end{example}
}

\section{Assembling Variables' Internal Candidates}\label{sec:Advancing}
In this section, we present another optimization technique: assembling variables' internal candidates. This technique is based on using the internal candidates of all variables in each site to filter out some false positives.

Existing RDF database systems used in sites storing individual fragments often adopt a filter-and-evaluate framework. They first compute out the candidates of all variables, and then search matches over these candidates. The process of finding candidates is often very quick. Hence, we can modify the codes of these systems and assemble the internal candidates in the coordinator site. When the set of internal candidates for variable $v$ (denoted as $C(Q,v)$) has been found, we do not find local partial matches directly, but send the set of internal candidates to the coordinator site.

The major benefit for assembling variables' internal candidates is to avoid some false positive local partial matches. When a site finds local partial matches, it does not consider how to join them with local partial matches in other sites. Hence, many unnecessary candidates may be generated, and they do not appear in any complete matches. To filter out these unnecessary candidates, the coordinator site can assemble and union the candidates' sets of a variable from all sites. If a candidate of variable $v$ can appear in a complete match, it must belong to $v$'s internal candidate sets from all sites. Then, when we compute the local partial matches, we avoid forming the local partial matches over those extended candidates that do not appear in the assembled internal candidates.

In practice, there may be too many internal candidates for each variable, resulting in a high communication cost. To reduce the communication cost, we compress the information of all internal candidates for each variable into a fixed-length bit vector. For variable $v$, we associate it with a fixed length bit vector $B_v$. We define a hash function to map each of $v$'s internal candidates in a site to a bit in $B_v$. Then, all $v$'s internal candidates can be compressed in $B_v$.
Thus, the coordinator site only needs to assemble all bit vectors of variables from different sites and to perform bitwise OR operations over bit vectors of a variable from different sites. We can send the result bit vectors of all variables to different sites and filter out some false positive candidates. Because the length of a bit vector is fixed, the communication cost is not too expensive.

Smaller search space can speed up evaluating the SPARQL query, meanwhile modern distributed environments have much faster communication networks than in the past. Hence, it is beneficial for us to afford the cost of communicating the candidate bit vectors between the coordinator site and the sites.

Algorithm 4 describes the optimization of assembling variables' internal candidates. The coordinator site receives and unions the bit vectors of candidates of all variables. Then, the coordinator site sends the result bit vectors of all variables to sites. For each site, it firstly finds out the candidates of variables locally, and compresses them into bit vectors. It then sends all bit vectors to the coordinator site, and waits for the bit vectors of all variables from the coordinator site.

With the received bit vectors of all variables, the site can filter out many false positive extended candidates during the computing of the local partial matches.

\begin{algorithm} \label{alg:putting}
\caption{Assembling Variables' Internal Candidates}
\KwIn{A distributed RDF graph $G$ over sites $\{S_1, ..., S_m\}$, coordinator site $S_c$, and the SPARQL query $Q$.}
\KwOut{ The internal bit vector $B_v$ of any variable $v$.}
\noindent\textbf{The Coordinator Site $S_c$:} \\
\For{each variable $v$ in $Q$}
{
    $B_v\gets 0$;\\
    \For{each site $S_{i}$}
    {
        Receive $B_v^\prime$ from $S_{i}$;\\
        $B_v\gets B_v\lor B_v^\prime$;\\
    }

    \For{each site $S_{i}$}{
        Send $B_v$ to $S_{i}$;\\
    }
}

\noindent\textbf{The Site $S_{i}$:} \\
\For{each variable $v$ in $Q$}
{
    Find $C(Q,v)$ and $B_v^\prime\gets 0$;\\
    \For{each internal candidate $c$ in $C(Q,v)$}{
        Use a hash function $h$ to map $c$ to an integer $h(c)$;\\
        Set the $h(c)$-th bit of $B_v^\prime$ to 1;\\
    }
    Send $B_v$ to $S_c$;\\
    Receive $B_v$ from $S_c$;\\
}
\end{algorithm}

\section{Impact of Partitioning Strategies}\label{sec:Partitioning}
In this section, we analyze the impact of different partitioning strategies for our method.

According to the above analysis, the costs of our method are mainly dependent on the number of LEC features. The straightforward heuristic is to reduce the number of crossing edges. However, if we examine the complexity of the cost more deeply, we discover that the small size of an edge cut does not always result in a small number of LEC features. For example, let us consider two example partitionings in Fig. \ref{fig:PartitioningExample}. Although the partitioning in Fig. \ref{fig:EISPartitioning} results in more crossing edges, its crossing edges are scattered to different boundary vertices. In contrast, all crossing edges in Fig. \ref{fig:METISPartitioning} are adjacent to one boundary vertex. When a star query $Q$ of two edges
as Fig. \ref{fig:PartitioningQuery}
is input, it maps to $\left( {\begin{array}{c}
   4  \\
   2 \\
\end{array}} \right)+\left( {\begin{array}{c}
   4  \\
   1 \\
\end{array}} \right)=10
$ LEC features for the partitioning in Fig. \ref{fig:METISPartitioning}, and $\left( {\begin{array}{c}
   3  \\
   2 \\
\end{array}} \right)+\left( {\begin{array}{c}
   3  \\
   1 \\
\end{array}} \right)+\left( {\begin{array}{c}
   2  \\
   2 \\
\end{array}} \right)+\left( {\begin{array}{c}
   2  \\
   1 \\
\end{array}} \right)=9
$ LEC features for the partitioning in Fig. \ref{fig:EISPartitioning}.

\begin{figure}[h]
   \centering
\subfigure[{ }]{%
		\includegraphics[scale=0.25]{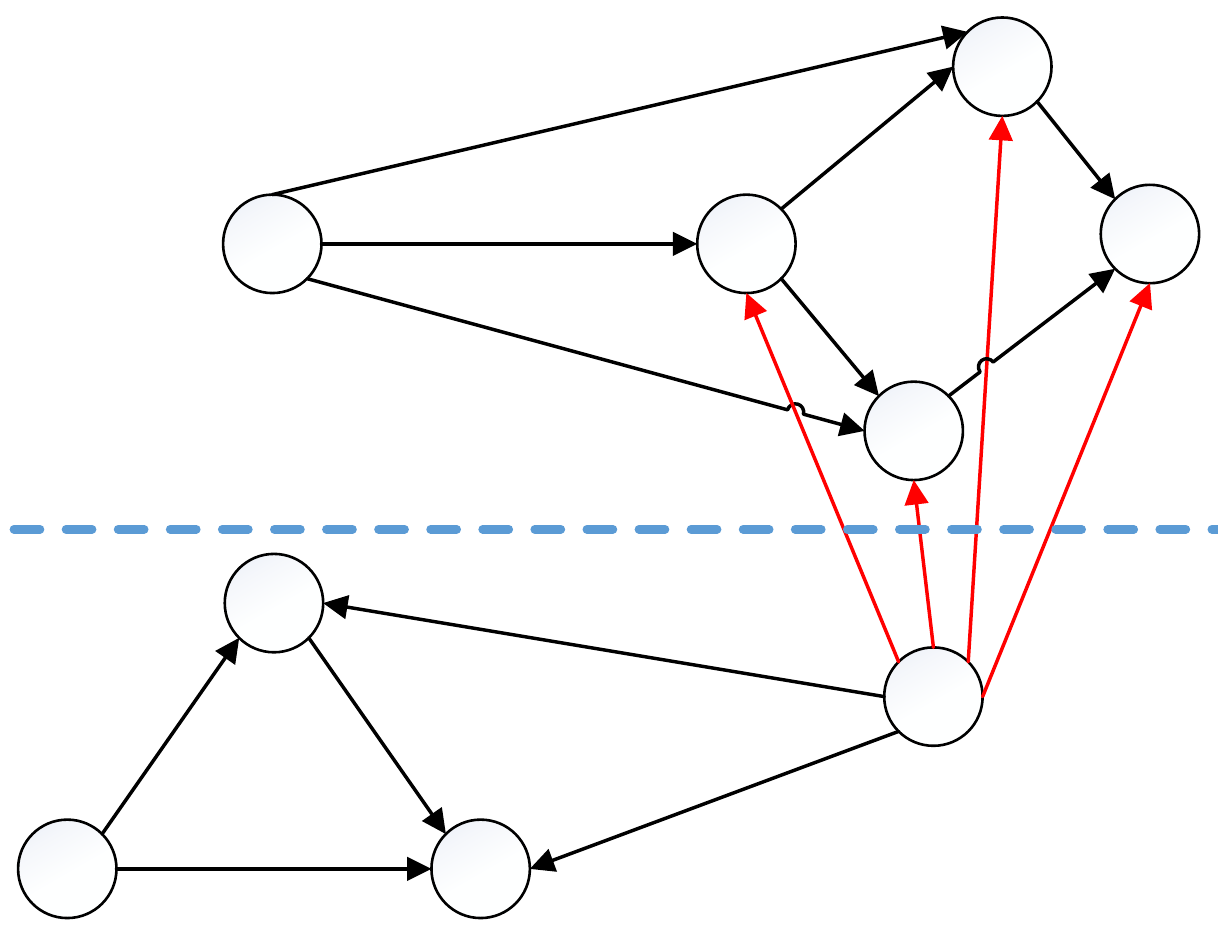}
       \label{fig:METISPartitioning}%
       }%
   \subfigure[{}]{%
		\includegraphics[scale=0.25]{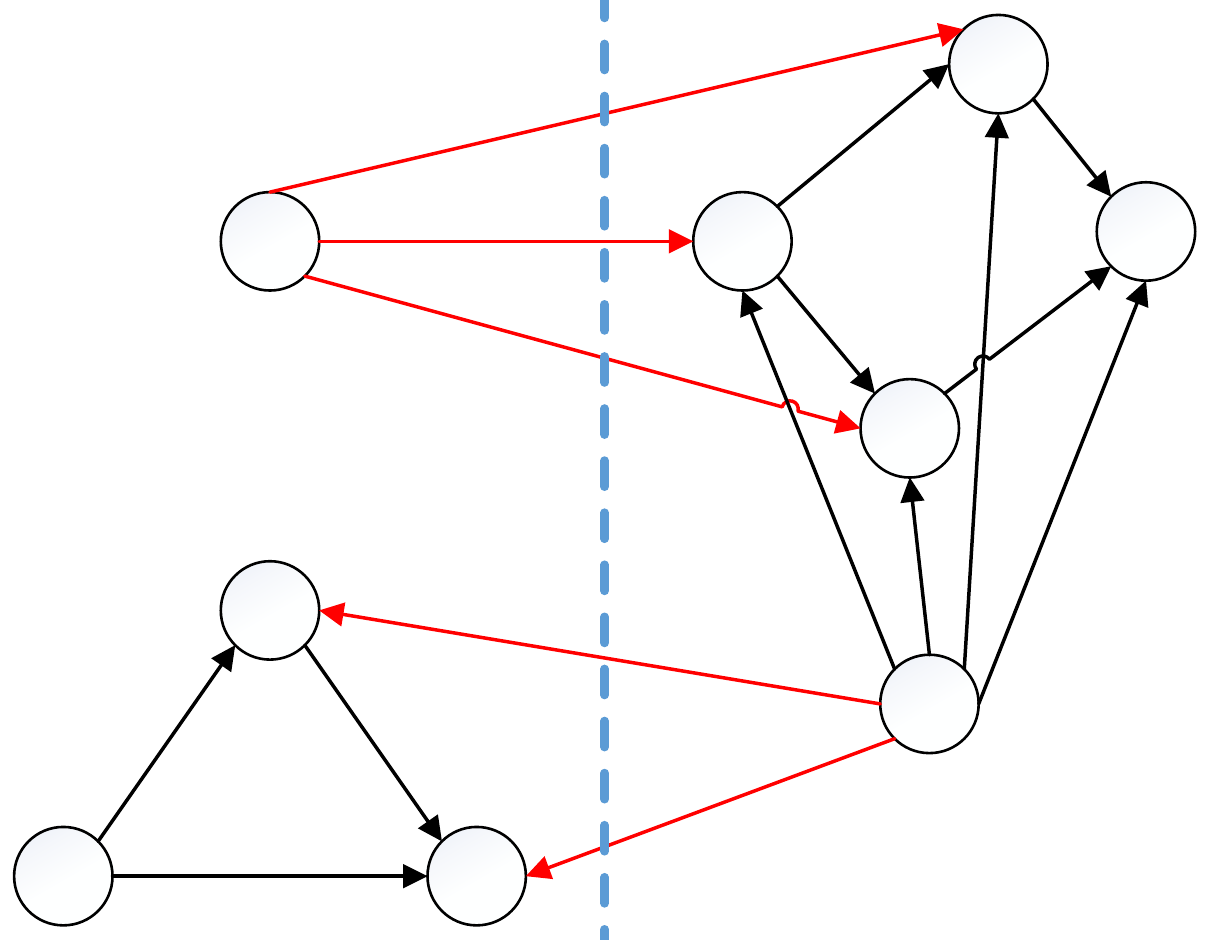}
       \label{fig:EISPartitioning}%
       }
\subfigure[{}]{%
		\includegraphics[scale=0.25]{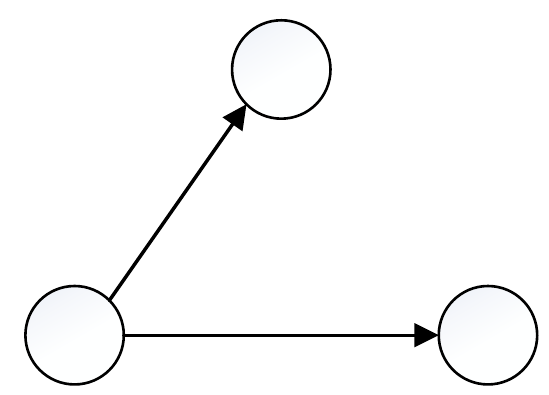}
       \label{fig:PartitioningQuery}%
       }
 \caption{ Comparison of Different Partitionings}%
 \label{fig:PartitioningExample}
\end{figure}

Based on the above observation, in a good partitioning for our method, the crossing edges need to be scattered to as many vertices as possible. Given a partitioning $\mathcal{F} = \{F_1,F_2,...,F_k\}$ and the
set of its crossing edges $E^c$, we define the distribution of crossing edges over a vertex $v$, $p_{\mathcal{F}}(v)$, as follows.
\[p_{\mathcal{F}}(v)=|N(v)\cap E^c|/(2\times |E^c|)\]

In the above, $N(v)$ is the set of $v$'s neighbors. Note that, an edge can be adjacent to two vertices, so the divisor in $p_{\mathcal{F}}(v)$ should be $2\times |E^c|$, which can ensure that the sum of the distributions over all vertices is $1$.

Then, the expectation of the number of crossing edges adjacent to a vertex $v$ is as follows.
\[E_{\mathcal{F}}(v)=|N(v)\cap E^c|\times p_{\mathcal{F}}(v)\]

Then, the total expectation of the number of crossing edges distributed to all vertices is as follows.
\[E_{\mathcal{F}}(V)=\sum_{v\in V}E_{\mathcal{F}}(v)\]

To scatter the crossing edges to as many vertices as possible, the above expectation should be as small as possible.

In addition, when we partition the graph, we should also balance the sizes of all fragments. Thus, we should avoid generating a fragment with too many edges. Here, we use the edge number of the largest fragment to measure the balance of fragments. In summary, we combine the above two factors to define the cost of a partitioning as follows.
\[Cost_{Partitioning}(\mathcal{F})=E_{\mathcal{F}}(V)\times \max_{1\le i\le k}|E_i\cup E_i^c|\]

Here, a more sophisticated partitioning strategy is beyond the scope of this study. We only select the partitioning with the smallest cost from the existing partitioning strategies.
For example, the cost of the partitioning in Fig. \ref{fig:METISPartitioning} is $27.5$, and the cost of the partitioning in Fig. \ref{fig:EISPartitioning} is $23.4$. Hence, the partitioning in Fig. \ref{fig:EISPartitioning} is a better partitioning to be selected.

\section{Experiments}\label{sec:Experiment}
In this section, we use some real and synthetic RDF datasets to conduct our experiments.

\subsection{Setting}
\textbf{LUBM}. LUBM \cite{DBLP:LUBM} is a benchmark that adopts an ontology for the university domain,
and can generate RDF data scalable to an arbitrary size. We generate three datasets of triples from 100 million to 1 billion, whose sizes vary from 15 GB to 150 GB. The dataset of 100 million triples is denoted as LUBM 100M, the one of 500 million triples is LUBM 500M and the one of 1 billion triples is LUBM 1B.
We use the 7 benchmark queries in \cite{DBLP:journals/pvldb/AbdelazizHKK17} (denoted as $LQ_1-LQ_7$) to test our methods.


\textbf{YAGO2}. YAGO2 \cite{DBLP:YAGO2} is a real RDF dataset that is extracted from Wikipedia.
YAGO2 also integrates its facts with the WordNet thesaurus.
It contains approximately $284$ million triples of 44 GB. We use the benchmark queries in \cite{DBLP:journals/pvldb/AbdelazizHKK17} (denoted as $YQ_1-YQ_4$) to evaluate our methods.

\textbf{BTC}. BTC\footnote{http://km.aifb.kit.edu/projects/btc-2012/}
is a real dataset used for the
Billion Triples Track of the Semantic Web Challenge, and contains approximately
1 billion triples of 176 GB. We use the 7 queries (denoted as $BQ_1-BQ_7$) in \cite{DBLP:journals/corr/PengZOCZ14} to test our methods.

\begin{table*}
  \caption{Evaluation of Each Stage on LUBM 100M}
\scriptsize
\begin{threeparttable}
  \begin{tabular}{|p{0.24cm}|p{0.06cm}|r|r|r|r|r|r|r|r|r|r|r|}
  \hline
     &  & \multicolumn{6}{c|}{Partial Evaluation}  & \multicolumn{1}{c|}{{Assembly}}  &  & &  &\\
  \cline{3-9}
   &  & \multicolumn{2}{l|}{\tabincell{p{2.5cm}}{Assembling Variables' Internal Candidates}} & \tabincell{p{1.4cm}}{Time of Local Partial Match}& \multicolumn{2}{l|}{\tabincell{p{2.5cm}}{LEC Feature-based Optimization}} & & \multicolumn{1}{l|}{\tabincell{p{1.4cm}}{ Time of LEC Feature-based}}  & \tabincell{p{1.1cm}}{Total Time (in ms)} & \tabincell{p{0.9cm}}{Local Partial}  &  \tabincell{p{0.9cm}}{Matches' Number} & \tabincell{p{1.0cm}}{Crossing Matches'} \\
  \cline{3-4}\cline{6-7}
    & &  \tabincell{p{0.75cm}}{Time (in ms)} &  \tabincell{p{1.5cm}}{Data Shipment (in KB)} &\tabincell{p{1.4cm}}{Computation (in ms)} & \tabincell{p{0.75cm}}{Time (in ms)} & \tabincell{p{1.5cm}}{Data Shipment (in KB)}	& Time(in ms) & \tabincell{p{1.4cm}}{Assembly ~~~~~~(in ms)} & &\tabincell{p{0.9cm}}{Matches' Number} & & \tabincell{p{1.0cm}}{Number ~~~~~~}\\
  \hline
   \tabincell{l}{$LQ_1$} &  & 4,029	& 2,032	& 21,550	& 2,054	& 38,882	& 27,633 & 12,539	& 40,172	& 276,327	& 21	& 21\\
  \cline{1-13}
    \tabincell{l}{$LQ_2$} &  &0 &  0 &8,488  & 0 &  0 &  0	&  0 &  8,488 & 0 & 864,197   & 0 \\
  \hline
  \tabincell{l}{$LQ_3$} &  & 568	& 16	& 2,795	& 0	& 0	& 3,363 & 0	& 3,363	& 0	& 0	& 0\\
  \cline{1-13}
   \tabincell{l}{$LQ_4$} & \tabincell{l}{$\surd$} & 0 & 0  & 221 &0  & 0  &  0	&  0 &  221 & 0 &  10  & 0\\
  \cline{1-13}
   \tabincell{l}{$LQ_5$} & \tabincell{l}{$\surd$} & 0 &  0 & 187 & 0 &  0 &  0	&  0 &  187 & 0 &  10  &0\\
   \cline{1-13}
   \tabincell{l}{$LQ_6$} & \tabincell{l}{$\surd$} & 1,556	& 136	& 1,516	& 61	& 1	& 3,133 & 9	& 3,142	& 228	& 125	& 114\\
  \cline{1-13}
    \tabincell{l}{$LQ_7$} &  & 7,827 & 	2,268	& 25,779	& 2,323	& 5,057	& 35,929 &  12,582	& 48,511	& 973,255	& 35,434	& 35,077\\
  \hline
  \end{tabular}
  \begin{tablenotes}
  \scriptsize
  \item[] $\surd$ means that the query involves some selective triple patterns.
  \end{tablenotes}
  \end{threeparttable}
  \label{table:queriesperformanceLUBM}
\end{table*}

\begin{table*}
  \caption{Evaluation of Each Stage on YAGO2}
\scriptsize
\begin{threeparttable}
  \begin{tabular}{|p{0.24cm}|p{0.06cm}|r|r|r|r|r|r|r|r|r|r|r|}
  \hline
     &  & \multicolumn{6}{c|}{Partial Evaluation}  & \multicolumn{1}{c|}{{Assembly}}  &  & &  &\\
  \cline{3-9}
   &  & \multicolumn{2}{l|}{\tabincell{p{2.5cm}}{Assembling Variables' Internal Candidates}} & \tabincell{p{1.4cm}}{Time of Local Partial Match}& \multicolumn{2}{l|}{\tabincell{p{2.5cm}}{LEC Feature-based Optimization}} & & \multicolumn{1}{l|}{\tabincell{p{1.4cm}}{ Time of LEC Feature-based}}  & \tabincell{p{1.1cm}}{Total Time (in ms)} & \tabincell{p{0.9cm}}{Local Partial}  &  \tabincell{p{0.9cm}}{Matches' Number} & \tabincell{p{1.0cm}}{Crossing Matches'} \\
  \cline{3-4}\cline{6-7}
    & &  \tabincell{p{0.75cm}}{Time (in ms)} &  \tabincell{p{1.5cm}}{Data Shipment (in KB)} &\tabincell{p{1.4cm}}{Computation (in ms)} & \tabincell{p{0.75cm}}{Time (in ms)} & \tabincell{p{1.5cm}}{Data Shipment (in KB)}	& Time(in ms) & \tabincell{p{1.4cm}}{Assembly ~~~~~~(in ms)} & &\tabincell{p{0.9cm}}{Matches' Number} & & \tabincell{p{1.0cm}}{Number ~~~~~~}\\
  \hline
   \tabincell{l}{$YQ_1$} &  & 188	&  13	&  1,007	&  879	&  6	&  2,094 & 79	& 2,153	& 811	& 17	& 17\\
  \cline{1-13}
    \tabincell{l}{$YQ_2$} &  & 315	& 15	& 999	& 26	& 1	& 1,340& 0	&1,340	&0	&0	&0\\
  \hline
  \tabincell{l}{$YQ_3$} &  & 1,341	& 137	& 3,292	& 1,599	& 1,317	& 6,232 & 21,404	& 27,636	& 816,382	& 605,993	& 588,390\\
  \cline{1-13}
   \tabincell{l}{$YQ_4$} &  &  388	& 27	& 2,036	& 1,602	& 293	& 4,026 & 686	& 4,712	& 16,661	& 226	& 224\\
  \hline
  \end{tabular}
  \end{threeparttable}
  \label{table:queriesperformanceYAGO2}
\end{table*}

\begin{table*}
\caption{Evaluation of Each Stage on BTC}
\scriptsize
\begin{threeparttable}
  \begin{tabular}{|p{0.24cm}|p{0.06cm}|r|r|r|r|r|r|r|r|r|r|r|}
  \hline
     &  & \multicolumn{6}{c|}{Partial Evaluation}  & \multicolumn{1}{c|}{{Assembly}}  &  & &  &\\
  \cline{3-9}
   &  & \multicolumn{2}{l|}{\tabincell{p{2.5cm}}{Assembling Variables' Internal Candidates}} & \tabincell{p{1.4cm}}{Time of Local Partial Match}& \multicolumn{2}{l|}{\tabincell{p{2.5cm}}{LEC Feature-based Optimization}} & & \multicolumn{1}{l|}{\tabincell{p{1.4cm}}{ Time of LEC Feature-based}}  & \tabincell{p{1.1cm}}{Total Time (in ms)} & \tabincell{p{0.9cm}}{Local Partial}  &  \tabincell{p{0.9cm}}{Matches' Number} & \tabincell{p{1.0cm}}{Crossing Matches'} \\
  \cline{3-4}\cline{6-7}
    & &  \tabincell{p{0.75cm}}{Time (in ms)} &  \tabincell{p{1.5cm}}{Data Shipment (in KB)} &\tabincell{p{1.4cm}}{Computation (in ms)} & \tabincell{p{0.75cm}}{Time (in ms)} & \tabincell{p{1.5cm}}{Data Shipment (in KB)}	& Time(in ms) & \tabincell{p{1.4cm}}{Assembly ~~~~~~(in ms)} & &\tabincell{p{0.9cm}}{Matches' Number} & & \tabincell{p{1.0cm}}{Number ~~~~~~}\\
  \hline
   \tabincell{l}{$BQ_1$} & \tabincell{l}{$\surd$} & 0 &  0 &259  & 0 &  0 &  0	&  0 &  259 & 0 & 1   & 0\\
  \cline{1-13}
    \tabincell{l}{$BQ_2$} &  \tabincell{l}{$\surd$}&0 &  0 &269  & 0 &  0 &  0	&  0 &  269 & 0 & 2   & 0 \\
  \hline
  \tabincell{l}{$BQ_3$} & \tabincell{l}{$\surd$} & 0 &  0 &187  & 0 &  0 &  0	&  0 & 187 & 0 & 0   & 0\\
  \cline{1-13}
   \tabincell{l}{$BQ_4$} &\tabincell{l}{$\surd$}  & 39,842	& 2,699	& 45,723	& 2,511	& 1	& 88,076		& 93	& 88,169	& 5	& 4	& 4 \\
  \cline{1-13}
   \tabincell{l}{$BQ_5$} & \tabincell{l}{$\surd$} & 45,962	& 1,929	& 6,858	& 1,504	& 1	& 54,324	& 	2	& 54,326	& 16	& 12	& 11 \\
   \cline{1-13}
   \tabincell{l}{$BQ_6$} &  & 19,663	& 1,047	& 1,589	& 756	& 1	& 22,008	& 	2	& 22,010	& 0	& 0	& 0 \\
  \cline{1-13}
    \tabincell{l}{$BQ_7$} &  & 35,849	& 3,071	& 21,233	& 2,848	& 1	& 59,930	& 	24	& 59,954	& 0	& 0	& 0 \\
  \hline
  \end{tabular}
  \end{threeparttable}
  \label{table:queriesperformanceBTC}
\end{table*}

We conduct all experiments on a cluster of 12 machines running Linux, each of which has two CPU with six cores of 1.2 GHz.
Each machine has 128 GB memory and 28 TB disk storage. We select one of these machines as the coordinator machine.
We use MPICH-3.0.4 running on C++ for communication.
By default, we use a hash function $H(v)$ to assign each vertex v in RDF graph to the i-th fragment if $H(v)\ MOD\ N = i$, where $N = 12$ is the number of machines. Each machine stores a single fragment. 

In this study, we revise gStore \cite{Zou:2013fk} to find local partial matches at each site. We denote our method as gStore$^D$.
We compare our approach with four other systems, including DREAM \cite{PVLDB2015:DREAM}, S2X \cite{DBLP:S2X}, S2RDF \cite{VLDB2016:S2RDF} and CliqueSquare \cite{ICDE2015:CliqueSquare}. The codes of these systems were released by \cite{DBLP:journals/pvldb/AbdelazizHKK17} in GitHub\footnote{https://github.com/ecrc/rdf-exp}. We also release our codes in GitHub\footnote{https://github.com/bnu05pp/gStoreD}.

\subsection{Evaluation of Each Stage}\label{sec:StageEvaluation}
In this experiment, we study the performance of our approaches at each stage (i.e., partial evaluation and assembly process)
with regard to different queries in LUBM 100M, YAGO2 and BTC. We report the running time of each stage, the size of the data shipment, the number of intermediate and complete results, and the communication time with regard to different queries in Tables \ref{table:queriesperformanceLUBM}, \ref{table:queriesperformanceYAGO2} and \ref{table:queriesperformanceBTC}.
Generally, the query performance mainly depends on two factors: the shape of the query graph and the existence of the selective triple patterns.


For the shape of the query graph, we divide all benchmark queries into two categories according to the complexities of their structures: stars and other shapes.
The evaluation times for star queries ($LQ_2$, $LQ_4$ and $LQ_5$ in LUBM, and $BQ_1$, $BQ_2$ and $BQ_3$ in BTC) are short. Each crossing edge in the distributed RDF graph is replicated, so any results of star queries are certain to be in a single fragment, and we can directly compute out the results over each fragment without considering communications and our optimization techniques. 
In contrast, queries of other query shapes involve multiple fragments, and generate local partial matches that increase the search space of the partial evaluation and the assembly process. 
Thus, evaluating them has a worse performance.

For the selective triple patterns, our method processes queries with selective triple patterns faster than queries without selective triple patterns. The performance of our method is dependent on the computation and assembly of local partial matches. The selective triple patterns can be used to filter out many irrelevant candidates and local partial matches, which significantly reduces the search space for computing and joining the local partial matches.

\subsection{Evaluation of Different Optimizations}
This experiment uses LUBM 100M and YAGO2 to test the effect of the three optimization techniques proposed in this study. Here, because star queries can be evaluated without involving any optimization techniques,
we only consider the benchmark
queries of other shapes ($LQ_1$, $LQ_3$, $LQ_6$ and $LQ_7$ in LUBM and all queries in YAGO2). We design a baseline that does not utilize any proposed optimization techniques (denoted as $gStore^D$-Basic), a baseline using only the optimization of the LEC feature-based assembly (denoted as $gStore^D$-LA), and a baseline using only the optimizations of the LEC feature-based assembly and LEC feature-based optimization (denoted as $gStore^D$-LO).
Fig. \ref{fig:OptimizationEvaluation} shows the experiment results.

In general, the optimization of LEC feature-based assembly only repartitions the local partial matches to reduce the join space and does not lead to extra communications, so $gStore^D$-LA has the same partial evaluation stage as $gStore^D$-Basic, and their difference is only on the assembly stage. Because $gStore^D$-LA optimizes the joining order without the extra communications, it is always faster than $gStore^D$-Basic. In contrast, the optimizations of assembling variables' internal candidates and LEC feature-based optimization lead to extra communications for internal candidates and local partial matches, so they may result in extra processing times. However, the optimizations are effective, and improve the performance in most cases. Especially for the selective queries of complex shapes ($LQ_3$ in LUBM and $YQ_1$, $YQ_2$, $YQ_4$ in YAGO2), the optimizations can improve the performance by orders of magnitude.

\begin{figure}[h]
   \centering
\subfigure[{LUBM 100M}]{%
		\resizebox{0.47\columnwidth}{!}{
				\begin{tikzpicture}[font=\Large]
 		 \begin{semilogyaxis}[
    			ybar,
        ymin=100,
ymax=100000000,
  ylabel = {Query Response Time (ms)},
   bar width=5pt,
   ymajorgrids = true,
   enlarge x limits=0.15,
    			symbolic x coords = {$LQ_1$,$LQ_3$,$LQ_6$,$LQ_7$},
    legend pos= north west,
 legend cell align=left
   		]

\addplot  coordinates {($LQ_1$, 117300.170) ($LQ_3$, 86269.018) ($LQ_6$, 3627.201) ($LQ_7$, 277564.979)};

  \addplot  coordinates {($LQ_1$, 114843.689) ($LQ_3$, 85001.779) ($LQ_6$, 3625.808) ($LQ_7$, 253570.868)};

    \addplot  coordinates {($LQ_1$, 53775.856) ($LQ_3$, 34319.261) ($LQ_6$, 3875.202) ($LQ_7$, 86802.065)};

    \addplot  coordinates {($LQ_1$, 40172.642) ($LQ_3$, 3363.339) ($LQ_6$, 3142.952) ($LQ_7$, 48511.743)};

   		  \legend{$gStore^D-Basic$ ,$gStore^D-LA$,$gStore^D-LO$,$gStore^D$}
  		\end{semilogyaxis}
\end{tikzpicture}
		}
       \label{fig:OptimizationEvaluationLUBM}%
       }
   \subfigure[{YAGO2}]{%
		\resizebox{0.47\columnwidth}{!}{
				\begin{tikzpicture}[font=\Large]
 		 \begin{semilogyaxis}[
    			ybar,
    ymin=100,
ymax=100000000,
  ylabel = {Query Response Time (ms)},
   bar width=5pt,
   ymajorgrids = true,
   enlarge x limits=0.15,
    			symbolic x coords = {$YQ_1$,$YQ_2$,$YQ_3$,$YQ_4$},
    legend pos= north west,
 legend cell align=left
   		]

\addplot  coordinates {($YQ_1$, 304954.229) ($YQ_2$, 130211.536) ($YQ_3$, 104716.408) ($YQ_4$, 204996.158)};

\addplot  coordinates {($YQ_1$, 104597.460) ($YQ_2$, 85001.779) ($YQ_3$, 28017.167) ($YQ_4$, 114564.306)};

   \addplot  coordinates {($YQ_1$, 72777.858) ($YQ_2$, 3023.756) ($YQ_3$, 28850.878) ($YQ_4$, 67249.811)};

    		\addplot  coordinates {($YQ_1$, 2153.174) ($YQ_2$, 1342.474) ($YQ_3$, 27636.616) ($YQ_4$, 4712.494)};

   		  \legend{$gStore^D-Basic$ ,$gStore^D-LA$,$gStore^D-LO$,$gStore^D$}
  		\end{semilogyaxis}
\end{tikzpicture}
		}
       \label{fig:OptimizationEvaluationYAGO2}%
       }
 \caption{ Evaluation of Different Optimizations}
 \label{fig:OptimizationEvaluation}
\end{figure}
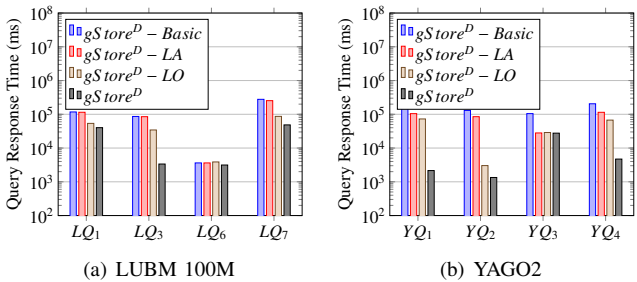

\subsection{Evaluation of Different Partitioning Strategies}\label{sec:PartitioningTest}
The aim of this experiment is to highlight the differences among different partitioning strategies. In this experiment, we use LUBM 100M and YAGO2 and test three partitioning strategies, hash partitioning, semantic hash partitioning \cite{VLDB13:SHAPE}, and METIS \cite{DBLP:metis}. Here, we also only consider the benchmark queries of other shapes. Table \ref{table:PartitioningCost} shows the costs of the different partitionings defined in Section \ref{sec:Partitioning}, while
Fig. \ref{fig:PartitioningEvaluationTime} shows the evaluation times of our method over different partitionings.

The hash partitioning can uniformly distribute vertices and crossing edges among different fragments. Hence, the cost of the hash partitioning is not too high. The semantic hash partitioning is based on the URI hierarchy. For LUBM 100M, because different entities have different URI hierarchies, the semantic hash partitioning can partition the entities totally based on their domains, which greatly reduces its partitioning cost.  
In contrast, all entities in YAGO2 have the same URI hierarchy, and the cost of the semantic hash partitioning is approximately same as the hash partitioning. Hence, the performance of our method over LUBM 100M in the semantic hash partitioning is better than other partitionings, while the performance over YAGO2 is similar. In addition, although there are fewer crossing edges in METIS, its partitioning result is much more imbalanced than the hash partitioning, indicating that the cost of METIS is high. Hence, the performance in METIS is always worse than the hash partitioning for YAGO2.

\begin{table}[h]
\caption{$Cost_{Partitioning}$}
\centering
\scriptsize
\begin{threeparttable}
  \begin{tabular}{|r|r|r|r|}
  \hline
     & Hash & Semantic Hash& METIS \\
  \hline
    YAGO2 & $0.76\times 10^{14}$ & $0.77\times 10^{14}$ & $1.49\times 10^{14}$\\
  \hline
    LUBM 100M & $0.92\times 10^{9}$ & $0.55\times 10^{9}$ & $0.67\times 10^{9}$\\
  \hline
  \end{tabular}
  \end{threeparttable}
  \label{table:PartitioningCost}
\end{table}

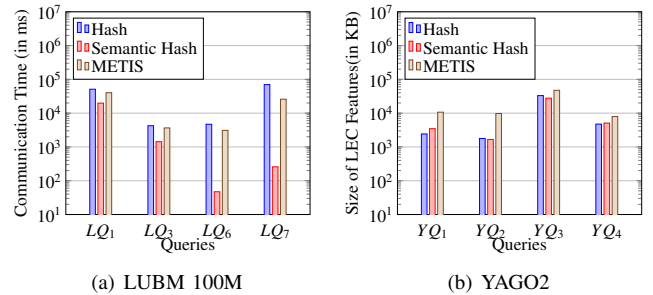
\begin{figure}[h]
   \centering
\subfigure[{LUBM 100M}]{%
		\resizebox{0.47\columnwidth}{!}{
				\begin{tikzpicture}[font=\Large]
 		 \begin{semilogyaxis}[
                anchor={(0,100)},
    			major x tick style = transparent,
    			ybar,
    			ymin = 10,
    			ymax = 10000000,
   			ymajorgrids = true,
   			ylabel = {Communication Time (in ms)},
    			xlabel = {Queries},
    			symbolic x coords = {$LQ_1$,$LQ_3$,$LQ_6$,$LQ_7$},
    			scaled y ticks = true,
			bar width=4.5pt,
             enlarge x limits=0.2,
			legend pos= north west,
 legend cell align=left
   		]

   \addplot coordinates {($LQ_1$, 51123.756) ($LQ_3$, 4243.363) ($LQ_6$, 4692.110) ($LQ_7$, 69945.621)};

   \addplot coordinates {($LQ_1$, 19809.461) ($LQ_3$, 1429.671) ($LQ_6$, 47.194) ($LQ_7$, 258) };

    		\addplot coordinates {($LQ_1$, 40300.378) ($LQ_3$, 3659.942) ($LQ_6$, 3093.450) ($LQ_7$, 25877.639)};

   		  \legend{Hash,Semantic Hash,METIS}
  		\end{semilogyaxis}
\end{tikzpicture}
		}
       \label{fig:LUBM100MPartitioningEvaluationTime}%
       }
       \subfigure[{YAGO2}]{%
		\resizebox{0.47\columnwidth}{!}{
				\begin{tikzpicture}[font=\Large]
 		 \begin{semilogyaxis}[
                anchor={(0,100)},
    			major x tick style = transparent,
    			ybar,
    			ymin = 10,
    			ymax = 10000000,
   			ymajorgrids = true,
   			ylabel = {Size of LEC Features(in KB)},
    			xlabel = {Queries},
    			symbolic x coords = {$YQ_1$,$YQ_2$,$YQ_3$,$YQ_4$},
    			scaled y ticks = true,
			bar width=4.5pt,
             enlarge x limits=0.2,
			legend pos= north west,
 legend cell align=left
   		]

\addplot coordinates {($YQ_1$, 2427.144) ($YQ_2$, 1783.421) ($YQ_3$, 33073.558) ($YQ_4$, 4748.009 ) };

  \addplot coordinates {($YQ_1$, 3465.401) ($YQ_2$, 1663.534) ($YQ_3$, 27707.272) ($YQ_4$, 5078.912) };

    		\addplot coordinates {($YQ_1$, 10687.940) ($YQ_2$, 9761.987) ($YQ_3$, 47413.566) ($YQ_4$, 7918.393) };
    		
   		  \legend{Hash,Semantic Hash,METIS}
  		\end{semilogyaxis}
\end{tikzpicture}
		}
       \label{fig:YAGO2PartitioningEvaluationTime}%
       }%
 \caption{Evaluation Time of Different Partitioning Strategies}
 \label{fig:PartitioningEvaluationTime}
\end{figure}

\subsection{Scalability Test}
We investigate the effect of data size on query evaluation times in this experiment. We generate three LUBM datasets, varying from 100 million to 1 billion triples, to test our method. 
Fig. \ref{fig:ScalabilityTest} shows the experiment results. As mentioned in Section \ref{sec:StageEvaluation}, we divide the queries into four categories according to their structures: star queries ($LQ_2$, $LQ_4$ and $LQ_5$) and other queries ($LQ_1$, $LQ_3$, $LQ_6$ and $LQ_7$).

In general, because the number of crossing edges linearly increases as the data size increases and our approach is partition bounded, the query response time also increases proportionally to the data size. Here, for queries of other shapes, the query response times may grow faster. This is because the other query graph shapes cause more complex operations in query processing, such as joining and assembly, and a larger number of local partial matches. However, even for queries of complex structures, the query performance is scalable with the RDF graph size on the benchmark datasets.

\begin{figure}[h]
   \centering
   \subfigure[{Star Queries}]{%
		\resizebox{0.47\columnwidth}{!}{
			\begin{tikzpicture}[font=\large]
    \begin{semilogyaxis}[
        xlabel=Datasets,
        ylabel=Query Response Time (in ms),
        xtick = {10,50,100},
        xticklabels={100M,500M,1B},
        legend cell align=left,
        legend style={draw=none},
        legend pos= north west
    ]
      \addplot plot[mark size=3.5pt] coordinates {
        (10,     8488)
        (50,   33679 )
        (100,    60254.06245)
    };
    \addplot plot[mark size=3.5pt]  coordinates {
        (10,     221)
        (50,    399.9851038)
        (100,    505.2395287)
    };
    \addplot plot[mark=triangle*,mark size=3.5pt]  coordinates {
        (10,    187)
        (50,    332.8495377)
        (100,   631.6011569)
    };

    \legend{$LQ_2$\\$LQ_4$\\$LQ_5$\\}

    \end{semilogyaxis}
\end{tikzpicture}
		}
       \label{fig:LinearScalabilityTest}%
       }%
\subfigure[{Other Queries}]{%
		\resizebox{0.47\columnwidth}{!}{
			\begin{tikzpicture}[font=\large]
    \begin{semilogyaxis}[
        xlabel=Datasets,
        ylabel=Query Response Time (in ms),
        xtick = {10,50,100},
        xticklabels={100M,500M,1B},
        legend cell align=left,
        legend style={draw=none},
        legend pos= north west
    ]
      \addplot plot[mark size=3.5pt] coordinates {
        (10,     40172)
        (50,   140430.0109 )
        (100,    316447.014)
    };
    \addplot plot[mark size=3.5pt]  coordinates {
        (10,    3363)
        (50,    15277.73319)
        (100,    30108.07983)
    };
    \addplot plot[mark=triangle*,mark size=3.5pt]  coordinates {
        (10,    3142)
        (50,    12594.30993)
        (100,   23103.9144)
    };
     \addplot plot[mark size=3.5pt]  coordinates {
        (10,     48511)
        (50,    167809.1843)
        (100,   289493.375)
    };

    \legend{$LQ_1$\\$LQ_3$\\$LQ_{6}$\\$LQ_7$\\}

    \end{semilogyaxis}
\end{tikzpicture}
		}
       \label{fig:SnowflakeScalabilityTest}%
       }
 \caption{Scalability Test}
 \label{fig:ScalabilityTest}
\end{figure}
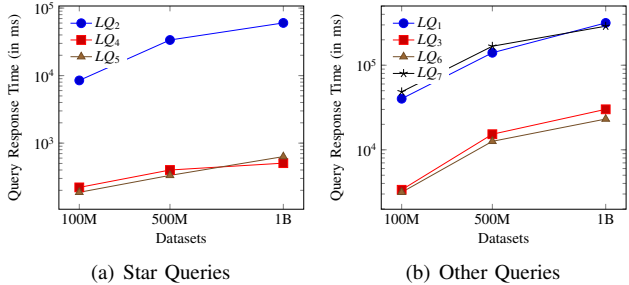

\subsection{Online Performance Comparison}
In this experiment, we evaluate the online performance of our method on different partitionings of three datasets, YAGO2, LUBM 1B, and BTC. Fig. \ref{fig:OnlineComparison} shows the performances. Note that METIS can only be used on YAGO2, and fails to partition LUBM 1B and BTC in our setting.

The results of this experiment include a comparative evaluation of our method against four state-of-the-art public disk-based distributed RDF systems proposed in the most recent three years, including DREAM \cite{PVLDB2015:DREAM}, S2X \cite{DBLP:S2X}, S2RDF \cite{VLDB2016:S2RDF}, and CliqueSquare \cite{ICDE2015:CliqueSquare}, which are provided by \cite{DBLP:journals/pvldb/AbdelazizHKK17}. Other distributed RDF systems in the most recent three years are either unreleased, or are memory-based systems that are in different environments than targeted in this study. Note that S2X fails to run all queries on LUBM 1B. We also run DREAM and CliqueSquare over BTC, while S2X and S2RDF fail over BTC.

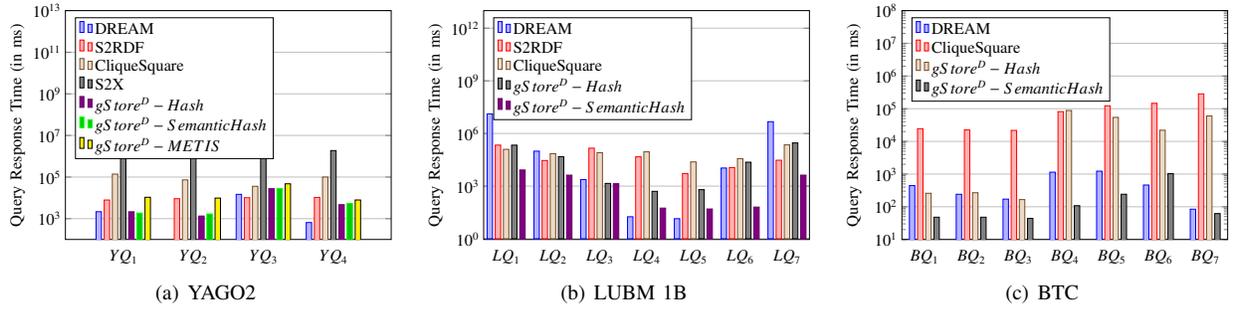
\begin{figure*}
   \centering
   \subfigure[{YAGO2}]{%
	\resizebox{0.6\columnwidth}{!}{
			\begin{tikzpicture}[font=\large]
 		 \begin{semilogyaxis}[
                anchor={(0,100)},
    			major x tick style = transparent,
    			ybar,
         width = 10cm,
               height = 7.5cm,
    ymax = 10000000000000,
    ymin = 100,
   			ymajorgrids = true,
   			ylabel = {Query Response Time (in ms)},
    			symbolic x coords = {$YQ_1$,$YQ_2$,$YQ_3$,$YQ_4$},
    			scaled y ticks = true,
			bar width=4pt,
             enlarge x limits=0.275,
			legend pos= north west,
 legend cell align=left
   		]

   \addplot  coordinates {($YQ_1$, 2161) ($YQ_2$, 0) ($YQ_3$, 14751) ($YQ_4$, 651) };

    		\addplot coordinates {($YQ_1$, 7893) ($YQ_2$, 9128) ($YQ_3$, 10161) ($YQ_4$, 10473) };

    \addplot  coordinates {($YQ_1$, 139021) ($YQ_2$, 73011) ($YQ_3$, 36006) ($YQ_4$, 100015) };

    		\addplot coordinates {($YQ_1$, 1811658) ($YQ_2$, 1863374) ($YQ_3$, 1720424) ($YQ_4$, 1876964) };

   \addplot coordinates {($YQ_1$, 2153) ($YQ_2$, 1340) ($YQ_3$, 27636) ($YQ_4$, 4712)};

   \addplot coordinates {($YQ_1$, 1870) ($YQ_2$, 1688.261) ($YQ_3$,  28300.534) ($YQ_4$, 5364.588)};

    		\addplot [fill=yellow] coordinates {($YQ_1$, 10687.940) ($YQ_2$, 9761.987) ($YQ_3$, 47413.566) ($YQ_4$, 7918.393)};

   		 \legend{DREAM, S2RDF, CliqueSquare, S2X, $gStore^D-Hash$, $gStore^D-SemanticHash$, $gStore^D-METIS$}
  		\end{semilogyaxis}
\end{tikzpicture}
	}
 \label{fig:yago2Comparison}%
}
 \subfigure[{LUBM 1B}]{%
		\resizebox{0.6\columnwidth}{!}{
				\begin{tikzpicture}[font=\large]
 		 \begin{semilogyaxis}[
                anchor={(0,100)},
    			major x tick style = transparent,
    			ybar,
         width = 10cm,
               height = 7.5cm,
    ymax = 10000000000000,
   			ymajorgrids = true,
   			ylabel = {Query Response Time (in ms)},
    			symbolic x coords = {$LQ_1$,$LQ_2$,$LQ_3$,$LQ_4$,$LQ_5$,$LQ_6$,$LQ_7$},
    			scaled y ticks = true,
			bar width=4pt,
             enlarge x limits=0.08,
			legend pos= north west,
 legend cell align=left
   		]

\addplot coordinates {($LQ_1$, 13031410) ($LQ_2$, 98263) ($LQ_3$, 2358) ($LQ_4$, 18) ($LQ_5$, 14) ($LQ_6$, 10755) ($LQ_7$, 4700381) };

\addplot coordinates {($LQ_1$, 217537) ($LQ_2$, 28917) ($LQ_3$, 145761) ($LQ_4$, 46770) ($LQ_5$, 5233) ($LQ_6$, 11308) ($LQ_7$, 29965)};

   \addplot  coordinates {($LQ_1$, 125020) ($LQ_2$, 71010) ($LQ_3$, 80010) ($LQ_4$, 90010) ($LQ_5$, 24000) ($LQ_6$, 37010) ($LQ_7$, 224040)};

   \addplot coordinates {($LQ_1$, 216447.014) ($LQ_2$, 47662.64815) ($LQ_3$, 1429.671) ($LQ_4$, 505.2395287) ($LQ_5$, 631.6011569) ($LQ_6$, 23103.9144) ($LQ_7$, 289493.375) };

\addplot coordinates {($LQ_1$, 8344) ($LQ_2$, 4235) ($LQ_3$, 1377) ($LQ_4$, 56) ($LQ_5$, 50) ($LQ_6$, 64) ($LQ_7$, 4239) };

   		 \legend{DREAM, S2RDF, CliqueSquare, $gStore^D-Hash$, $gStore^D-SemanticHash$}
  		\end{semilogyaxis}
\end{tikzpicture}
		}
 \label{fig:LUBM1BComparison}%
  }
  \subfigure[{BTC}]{%
		\resizebox{0.6\columnwidth}{!}{
				\begin{tikzpicture}[font=\large]
 		 \begin{semilogyaxis}[
                anchor={(0,100)},
    			major x tick style = transparent,
    			ybar,
         width = 10cm,
               height = 7.5cm,
    ymax = 100000000,
    ymin = 10,
   			ymajorgrids = true,
   			ylabel = {Query Response Time (in ms)},
    			symbolic x coords = {$BQ_1$,$BQ_2$,$BQ_3$,$BQ_4$,$BQ_5$,$BQ_6$,$BQ_7$},
    			scaled y ticks = true,
			bar width=4pt,
             enlarge x limits=0.08,
			legend pos= north west,
 legend cell align=left
   		]

   \addplot coordinates {($BQ_1$, 444.567) ($BQ_2$, 241.317) ($BQ_3$, 171.516) ($BQ_4$, 1143.66) ($BQ_5$, 1228.86) ($BQ_6$, 462.784) ($BQ_7$, 84.4228) };

   \addplot coordinates {($BQ_1$, 24445.67) ($BQ_2$, 22413.17) ($BQ_3$, 21715.16) ($BQ_4$, 81436.6) ($BQ_5$, 122886) ($BQ_6$, 146278.4) ($BQ_7$, 284422.8) };

   \addplot coordinates {($BQ_1$, 259) ($BQ_2$, 269) ($BQ_3$, 166) ($BQ_4$, 88169) ($BQ_5$, 54326) ($BQ_6$, 22010) ($BQ_7$, 59954) };

\addplot coordinates {($BQ_1$, 48) ($BQ_2$, 48) ($BQ_3$, 44) ($BQ_4$, 108) ($BQ_5$, 242) ($BQ_6$, 1029) ($BQ_7$, 62) };

   		 \legend{DREAM, CliqueSquare,$gStore^D-Hash$,$gStore^D-SemanticHash$}
  		\end{semilogyaxis}
\end{tikzpicture}
		}
 \label{fig:BTCComparison}%
  }
 \caption{Online Performance Comparison}
 \label{fig:OnlineComparison}
\end{figure*}

Generally, our method is partitioning-tolerant, and the performances of our method over different partitionings show the superiority of our proposed approach.

In particular, S2X, S2RDF, and CliqueSquare are three cloud-based systems that suffer from the expensive overhead of scans and joins in the cloud. Only when the queries ($LQ_1$, $LQ_2$ and $LQ_7$ in LUBM) are unselective and are evaluated over a very large RDF dataset (LUBM 1B) that can generate many intermediate results might they have better performances than DREAM and our approach when running over ill-suited partitionings. However, when our method runs over partitionings with the smallest costs (hash partitioning for YAGO2 and semantic hash partitioning for LUBM 1B and BTC), our method can outperform others.

On the other hand, when the queries ($LQ_3$, $LQ_4$, $LQ_5$ and $LQ_6$ in LUBM 1B and all queries in BTC) are selective or the RDF dataset (YAGO2) is not very large, DREAM \cite{PVLDB2015:DREAM} and our system can outperform the cloud-based systems in most cases. Here, DREAM builds a single RDF-3X database for the entire dataset in each site, and decomposes the input query into multiple star-shape subqueries, where each subquery is answered by a single site. This can greatly reduce the performances over the selective queries and small datasets. However, DREAM exhibits excessive replication, and causes huge overhead when processing complex queries. When a query is complex, it may lead to multiple large subqueries. Evaluating the large
subqueries over a site of the entire dataset often results in many intermediate results, and joining these intermediate
results is also costly. Our method, running over the best partitionings, can always be comparable to DREAM. In addition, DREAM fails to process $YQ_2$.

\section{Related Work}\label{sec:relatedwork}
\textbf{Distributed SPARQL Query Processing.}
There have been many works on distributed SPARQL query processing, and a very good survey is \cite{VLDBJ15:RDFCloudSurvey}.
 In recent years, some approaches such as \cite{ICDE2015:CliqueSquare,TKDE2015:DiploCloud,DBLP:conf/icde/WuZYLJ15,PVLDB2015:AdHash,VLDB2016:AdPart,DBLP:journals/corr/Peng0ZZ15,PVLDB2015:DREAM,VLDB2016:S2RDF,PVLDB2017:Stylus} have been proposed. We classify them into three classes: cloud-based approaches, partitioning based approaches, and partitioning-tolerant approaches.

First, some recent works (e.g., \cite{ICDE2015:CliqueSquare,VLDB2016:S2RDF,DBLP:S2X}) focus on managing RDF datasets using cloud platforms. CliqueSquare \cite{ICDE2015:CliqueSquare} discusses how to build query plans by relying on n-ary (star) equality joins in Hadoop. S2RDF \cite{VLDB2016:S2RDF} uses Spark SQL to store the RDF data in a vertical partitioning schema and materializes some extra join relations. In the online phase, S2RDF transforms the query into SQL queries
and merges the results of the SQL queries.
S2X \cite{DBLP:S2X} uses GraphX in Spark to evaluate SPARQL queries. S2X first distributes all triple patterns to all vertices. Then, vertices validate their triple candidacy with their neighbors by exchanging messages. Lastly, the partial results are collected and merged. Stylus \cite{PVLDB2017:Stylus} uses Trinity \cite{SIGMOD2013:Trinity}, a distributed in-memory key-value
store, to maintain the adjacent list of the RDF graph while considering the types. In the online phase, Stylus decomposes the query into multiple star subqueries and evaluates the subqueries by using the interfaces of Trinity.

Second, some approaches \cite{TKDE2015:DiploCloud,DBLP:conf/icde/WuZYLJ15,PVLDB2015:AdHash,VLDB2016:AdPart,DBLP:journals/corr/Peng0ZZ15} are partition-based. They divide an RDF graph into several partitions. Each partition is placed at a site that installs a centralized RDF system to manage it. At run time, a SPARQL query is decomposed into several subqueries that can be answered locally at a site. The results of the subqueries are finally merged. Each of these approaches has its own data partitioning strategy, and different partitioning strategies result in different query decomposition methods. 
DiploCloud \cite{TKDE2015:DiploCloud} asks the administrator to define some templates as the partition unit. DiploCloud stores the instantiations of the templates in compact lists as in a column-oriented database system; PathBMC \cite{DBLP:conf/icde/WuZYLJ15} adopts the end-to-end path as the partition unit to partition the data and query graph; AdHash \cite{PVLDB2015:AdHash} and AdPart \cite{VLDB2016:AdPart} directly use the subject values to partition the RDF graph and mainly discuss how to reduce the communication cost during distributed query evaluation; and Peng et al. \cite{DBLP:journals/corr/Peng0ZZ15} mine some frequent patterns in the log as the partitioning units.

DREAM \cite{PVLDB2015:DREAM} and Peng et al. \cite{DBLP:journals/corr/PengZOCZ14} are two other approaches that neither partition RDF graphs nor use existing cloud platforms. In DREAM \cite{PVLDB2015:DREAM}, each site maintains the whole RDF dataset. For query processing, DREAM divides the input query into subqueries, and executes each subquery in a site. The intermediate results are merged to produce the final matches. Peng et al. \cite{DBLP:journals/corr/PengZOCZ14} propose a partition-tolerant distributed approach based on the ``partial evaluation and assembly'' framework. However, its efficiency has a large potential for improvement.

\textbf{Partial Evaluation}. As surveyed in \cite{DBLP:journals/csur/Jones96}, partial evaluation has found many applications ranging from compiler optimization to distributed evaluation of functional programming languages.
Recently, partial evaluation has been used for evaluating queries on distributed graphs, as in \cite{DBLP:conf/www/MaCHW12,DBLP:journals/pvldb/FanWW12,DBLP:journals/pvldb/FanWWD14,DBLP:conf/sigmod/GurajadaT16,Wang:2016:EDR:2983323.2983877}. In \cite{DBLP:journals/pvldb/FanWW12,DBLP:conf/sigmod/GurajadaT16}, the authors provide algorithms for evaluating reachability queries on distributed graphs based on partial evaluation. In \cite{DBLP:conf/www/MaCHW12,DBLP:journals/pvldb/FanWWD14}, the authors study partial evaluation algorithms and
optimizations for distributed graph simulation. Wang et al. \cite{Wang:2016:EDR:2983323.2983877} discuss how to answer regular path queries on large-scale RDF graphs using partial evaluation.
Peng et al. \cite{DBLP:journals/corr/PengZOCZ14} discuss how to employ the ``partial evaluation and assembly'' framework to handle SPARQL queries, but it fails to provide performance guarantees on the total network traffic and the response time.
\section{Conclusion}\label{sec:Conclusion}
In this study, we propose three optimizations to improve the partial evaluation-based distributed SPARQL query processing approach. The first is to compress the partial evaluation results in a compact data structure named the \emph{LEC feature}, and to communicate them among sites to filter out some irrelevant partial evaluation results while providing some performance guarantees. The second is the LEC feature-based assembly of all local partial matches to reduce the search space. Moreover, we propose an optimization that communicates variables' candidates among the sites to avoid irrelevant local partial matches. We also discuss the impact of different partitionings over our approach. In addition, we perform extensive experiments to confirm our approach.

\small{
\textbf{Acknowledgement.} This work was supported by The National Key Research and Development Program of China under grant 2018YFB1003504, NSFC under grant 61702171, 61622201 and 61532010, Hunan Provincial Natural Science Foundation of China under grant 2018JJ3065, and the Fundamental Research Funds for the Central Universities.
}

\bibliographystyle{abbrv}
\bibliography{sigproc}

\end{document}